\newcommand{\disp}{\ensuremath{\displaystyle}}
\def\eps{\varepsilon}
\def\R {\mathbb{R}}
\def\N {\mathbb{N}}
\def\H {{\mathcal H}}
\def\B {{\mathcal B}}
\def\D {{\mathcal D}}
\def\E {{\mathcal E}}
\def\A {{\mathcal A}}
\def\L {{\mathcal L}}
\def\K {{\mathcal K}}
\def\F {{\mathcal F}}
\def \l {\langle}
\def \r {\rangle}
\def \pt {\partial_t}
\def \ptt {\partial_{tt}}
\newtheorem{theorem}{Theorem}
\newtheorem{lemma}[theorem]{Lemma}
\newdefinition{remark}{Remark}
\newdefinition{definition}{Definition}
\newdefinition{notation}{Notation}
\newproof{proof}{Proof}
\newproof{pot}{Proof of Theorem \ref{thm2}}
\numberwithin{equation}{section}
\journal{Elsevier}
\begin{document}

\begin{frontmatter}



\title{Buckling and longterm dynamics of a nonlinear model for the extensible beam}


\author[rvt]{I.˜Bochicchio\corref{cor1}}
\ead{ibochicchio@unisa.it}

\author[focal]{E.˜Vuk}
\ead{vuk@ing.unibs.it}

\cortext[cor1]{Corresponding author}

\address[rvt]{Dipartimento di Matematica e Informatica, Universit\`a di
Salerno,\\ Via Ponte don Melillo, 84084 Fisciano (SA), Italy }
\address[focal]{Dipartimento di Matematica, Universit\`a di
Brescia,\\ Via Valotti 9, 25133 Brescia, Italy}

\begin{abstract}

This work is focused on the longtime behavior of a non linear
evolution problem
describing the vibrations of an extensible elastic homogeneous
beam resting on a viscoelastic foundation with stiffness $k>0$ and
positive damping constant. Buckling of solutions occurs as the
axial load exceeds the first critical value, $\beta_c$, which
turns out to increase piecewise-linearly with $k$. Under hinged
boundary conditions and for a general axial load $P$, the
existence of a global attractor, along with its characterization,
is proved  by exploiting a previous result on the extensible
viscoelastic beam. As $P\leq\beta_c$, the stability of the
straight position is shown for all values of $k$. But, unlike the
case with null stiffness, the exponential decay of the related
energy is proved if $P<\bar\beta(k)$, where
$\bar\beta(k)\leq\beta_c(k)$ and the equality holds only for small
values of  $k$.
\end{abstract}

\begin{keyword}
extensible elastic beam \sep absorbing set \sep exponential
stability \sep global attractor




\MSC 35B40 \sep 35B41 \sep 37B25 \sep 74G60 \sep 74H40 \sep 74K10

\end{keyword}

\end{frontmatter}


\section{Introduction}

\subsection{The model equation}
In this paper we investigate the longtime behavior of the
following evolution problem
\begin{equation}
\label{PROB}
\begin{cases}
\displaystyle \ptt u+\partial_{xxxx}u
-\left(\beta+\int_0^1|\partial_\xi u(\xi,\cdot)|^2 d \xi\right)\partial_{xx}u=-ku-\delta\pt u+f,\\
u(0,t)=u(1,t)=\partial_{xx}u(0,t)=\partial_{xx}u(1,t)=0,\\
u(x,0)=u_0(x),\quad \pt u(x,0)=u_1(x),
\end{cases}
\end{equation}
in the unknown variable $u=u(x,t):[0,1]\times\R^+\to\R$,
$\R^+=[0,\infty)$, which represents the vertical deflection of the
beam. For every $x\in[0,1]$, $u_0$, $u_1$ are assigned data. The
real function $f=f(x)$ is the (given) lateral static load
distribution and $-ku-\delta\pt u$ represents the (uniform)
lateral action effected by the medium surrounding the beam.
Finally, the parameter $\beta\in\R$ accounts for the axial force
acting in the reference configuration: $\beta>0$ when the beam is
stretched, $\beta<0$ when the beam is compressed. Usually, the
axial load is referred as $P=-\beta$.

The solutions to problem~(\ref{PROB}) describe the transversal
vibrations (in dimensionless variables) of an extensible elastic
beam, which is assumed to have hinged ends and to rest on a
viscoelastic foundation with stiffness $k>0$ and damping constant
$\delta>0$. The geometric nonlinearity which is involved accounts
for midplane stretching due to the elongation of the bar. A
simplified version of this beam model has been adopted to study
the vibration of railway track structures resting on a
viscoelastic soil (see \cite{ACC}). There, the elastic and damping
properties of the rail bed are accounted for by continuously
distributed or closely spaced spring--damper units.

In recent years, an increasing attention was payed on the analysis
of vibrations and post-buckling dynamics of nonlinear beam models,
especially in connection with industrial applications. For a
detailed overview, we refer the reader to \cite{NP} and references
therein. Nowadays the study of this subject has become of
particular relevance in the analysis of micromachined beams
\cite{FW,YN} and microbridges \cite{LG}.


It is worth noting that the static counterpart of
problem~(\ref{PROB}) reduces to
\begin{eqnarray}
\label{STATIC}
\begin{cases}
 u{''''}
-\left(\beta+\int_0^1|\partial_\xi u(\xi,\cdot)|^2 \,d \xi\right)u{''}+ku=f,\\
u(0)=u(1)=u{''}(0)=u{''}(1)=0.
\end{cases}
\end{eqnarray}
Obviously, these steady-state equations does not change in
connection with dynamical models accounting for any kind of
additional damping, due to structural and/or external mechanical
dissipation. When $k\equiv0$ the investigation of the solutions to
(\ref{STATIC}) and their stability, in dependence on $\beta$,
represents a classical {\it nonlinear buckling problem} in the
structural mechanics literature (see, for instance,
\cite{B1,DI1,RM}) which traces back to the pioneer paper by
Woinowsky-Krieger \cite{W}. Numerical solutions for this problem
are available in the literature (see, for instance, \cite{CC}).
Recently, a careful analysis of the corresponding buckled
stationary states and their stability properties was performed in
\cite{NE} for all values of $\beta$. In \cite{CZGP} this analysis
was improved to include a more general nonlinear term and a source
$f$ with a general shape. As far as we know, no similar analysis
of (\ref{STATIC}) for the case $k>0$ is present in the literature.

Neglecting the stiffness of the surrounding medium, $k\equiv0$,
the global dynamics of solutions to problem~(\ref{PROB}) has been
first tackled by Hale \cite{HAL}, who proved the existence of a
global attractor for a general $\beta$, relying on the existence
of a suitable Lyapunov functional. The corresponding problem for
an extensible {\it viscoelastic} beam has been addressed in
\cite{IB}, when $\delta>0$, and in \cite{GPV}, when
$\delta\equiv0$.  In spite of the difficulty which is represented
by the geometric nonlinearity, in all these papers  the existence
of the global attractor, along with its optimal regularity, is
obtained for a general value of $\beta$ by means of an abstract
operator setting. The analysis of the bending motion for an
extensible {\it thermoelastic beam} is even more tangled since the
dissipation is entirely contributed by the heat equation, where
the Fourier heat conduction law is assumed.~Neverthelesss, the
existence of a regular global attractor can be shown even
including into (\ref{PROB}) the rotatory inertia term \cite{GNPP}.

A common feature of previously quoted results is the following: if
$f=0$ the exponential decay of the energy is provided when
$\beta>-\beta_c$, so that the unique null solution is
exponentially stable if $P<\beta_c$. On the contrary, as the axial
load $P$ exceeds $\beta_c$ the straight position loses stability
and the beam buckles. So, when $\beta<-\beta_c$ a finite number of
buckled solutions occurs and the global (exponential) attractor
coincides with the unstable trajectories connecting them. For a
general time-independent source term $f$, the number of buckled
solutions may be infinite and the attractor coincides with the
unstable set of the stationary points. The positive critical value
$\beta_c$ is named {\it Euler buckling load} and, in the purely
mechanical case, it is equal to the square root of the first
eingenvalue of the $\partial_{xxxx}$ operator (which is referred
as $\lambda_1$ in the sequel). In the thermoelastic case, because
of the thermal expansion, the mean axial temperature of  the beam
also affects the value of $\beta_c$ (see \cite{GNPP}).

\subsection{Outline of the paper}

At a first sight, problem~(\ref{PROB}) with $k>0$ looks like a
slight modification of previously scrutinized models where $k$
vanishes. This is partially true. In particular, we remark that
the restoring elastic force, acting on each point of the beam,
opposes the buckling phenomenon.  So, the Euler buckling limit
$\beta_c$ is no longer equal to $\sqrt{\lambda_1}$, but now turns
into an increasing piecewise-linear function of $k$. When the
lateral load $f$ vanishes, the null solution is unique provided
that $\beta>-\beta_c(k)$, and
buckles when $\beta$ exceeds this critical value. In general, as well as in
the case $k=0$, the set of buckled solutions is finite, but for some special positive values
of $k$, called {\it resonant values}, infinitely many solutions
may occur (see Theorem \ref{TH-stat.solut}).

By paralleling the results for $k=0$, the null solution is
expected to be exponentially stable, when it is unique. Quite
surprisingly, it is not so. For large values of $k$, the energy
decays with a sub-exponential rate when
$-\bar\beta>\beta>-\beta_c$ (see Theorem \ref{exp-stab}). In
particular, for any fixed $k>\lambda_1$, the positive limiting
value $\bar\beta(k)$ is smaller than the critical value
$\beta_c(k)$, and the former overlaps the latter only if $0\leq
k\leq \lambda_1$. A picture of these functions, as $k$ runs the
positive axis, is given in Fig. \ref{fig}.

The plan of the paper is as follows. In Section 2 we discuss the
general functional framework of  (\ref{PROB}) and exact
solutions for the stationary postbuckling problem is presented for
all $k>0$, when $f=0$. After formulating an abstract version of the
dynamical problem, the existence of an absorbing set is addressed
in Section 3. Some preliminary estimates and the exponential stability result are established in Section 4. The
main result concerning the existence of a regular global attractor
is stated in Section 5, where the asymptotic smoothing property of
the semigroup generated by the abstract problem is proved via a
suitable decomposition first proposed in \cite{GPV}.

Although we assume here that both ends of the beam are hinged, different boundary conditions for $u$ are also physically significant, such as when both ends are clamped, or one end hinged and the other one clamped.  On the contrary, the so-called cantilever boundary condition (one end clamped and the other one free) is not covered because it is pointedly inconsistent with the extensibility assumption of the model.
Nevertheless, the hinged case we consider here is very special. Indeed, other boundary conditions lead to a completely different analysis that must take into consideration very special estimates for the complementary traces on the boundary, and only weaker forms of the regularity properties of solutions remain valid (see, for instance, \cite{LT}).

It is worth noting that several papers (see, for instance,
\cite{CC,CCK,Ma}) are devoted to approximations as well as
numerical simulations in the modelling of the deformations of
extensible beams on elastic supports. In this connection, our
paper which exhibits exact solutions is of interest in order to
fit computer applications.
In particular, our model can be useful for engineering
applications involving simply supported bridges subjected to
moving vertical loads. For instance, it may be adapted to the
study of lively footbridges \cite{venuti1,venuti2}.

Finally, we remark that our analysis is carried over an abstract version of the original problem which is independent of the space dimension,
so that it could be extended to scrutinize shear deformations in plate models.
The techniques of this work apply to plate models as well, without substantial changes.

In addition, our strategy can be generalized to the investigation
of non-linear dissipative models which describe the
vibrations of extensible viscoelastic beams where the dissipative
term derives from the internal viscoelastic dissipation (memory).
Moreover, we are going to scrutinize the longterm damped dynamics of extensible elastic bridges suspended by flexible and elastic
cables. In this model, the term $-ku$ is replaced by $-ku^{+}$ and
it represents a restoring force due to the cables, which is
different from zero only when they are being stretched.
\section{Stationary solutions}
Our aim is to analyze the multiplicity of solutions to the boundary value problem \eqref{STATIC}. Letting $L^2(0,1)$ the Hilbert space of square summable functions on $(0,1)$, the domain of the differential operator $\partial_{xxxx}$ appearing in \eqref{STATIC} is
$$
\D(\partial_{xxxx})=\{w\in H^4(0,1) : w(0)=w(1)=w''(0)=w''(1)=0\}.
$$
This operator is strictly positive selfadjoint with compact inverse, and its discrete spectrum is given by $\lambda_n=n^4\pi^4$, $n\in\N$.
Thus, ${\lambda_1=\pi^4}$ is the smallest eigenvalue. Besides, the following relation holds true
\begin{equation}
\label{DELTA}
(\partial_{xxxx})^{1/2}=-\partial_{xx}\,,\qquad\D(-\partial_{xx})=H^2(0,1)\cap H^1_0(0,1).
\end{equation}
For every $k>0$, let
$$
\mu_n(k)=\frac{k}{n^2\pi^2}+ n^2\,\pi^2\,,\qquad \beta_c(k)=
\min_{n\in\N}\mu_n(k)\ .
$$
Assuming that $n_k\in\N$ be such that
$\displaystyle\mu_{n_k}=\min_{n\in\N}\mu_n(k)$, then it satisfies
$$
(n_k-1)^2n_k^2\leq \frac k {\pi^4}< n_k^2(n_k+1)^2\,.
$$
As a consequence, $\beta_c(k)$ is a piecewise-linear function of
$k$ (see Fig. \ref{fig}). The set
$${\mathcal R}=\{ i^2j^2\pi^4:i,j\in\N, i<j\}$$
is referred as the {\it resonant set}: when $k\in{\mathcal R}$
there exists at least a value   $\mu_j(k)$ which is not simple.
Indeed,
 $\mu_i=\mu_j$, $i\neq j$, provided that $k= i^2j^2\pi^4$ (resonant values). In the sequel, let $\mu_m(k)$ be the smallest value of $\{\mu_n\}_{n\in\N}$ which is not simple. Of course, the $\mu_n(k)$ are all simple and increasingly ordered with $n$ whenever $k<4\pi^4$.  Given $k>0$, for later convenience let $n_\star$ be the integer-valued function given by
 \begin{equation}\nonumber
n_{\star }(\beta)=|{\mathcal N}_\beta|\,, \qquad{\mathcal
N}_\beta= \{n\,\in \mathbb{N}:\beta +\mu_n(k)<0\},
\end{equation}
where $|{\mathcal N}|$ stands for the cardinality of the set
${\mathcal N}$.

In the homogeneous case, we are able to establish the exact number
of stationary solutions of (\ref{STATIC}) and their explicit form.
In particular, we will show that there is always at least one
solution, and at most a finite number of solutions, whenever the
values of $\mu_n(k)$ not exceeding $-\beta$ are simple.

\begin{theorem}
\label{TH-stat.solut}
 If $\beta \geq - \beta_{c}(k)$, then for every $k>0$ system
 (\ref{STATIC})
 with $f=0$ has the null solution, corresponding to the straight equilibrium position. Otherwise:
\begin{itemize}
\item if $k\in{\mathcal R}$ and $\beta < -\mu_m(k)$, the smallest
non simple eigenvalue, there are infinitely many solutions; \item
whether $k\in{\mathcal R}$ and $-\mu_m(k)\leq \beta < -
\beta_{c}(k)$, or $k\not\in{\mathcal R}$ and $\beta < -
\beta_{c}(k)$, then besides the null solution there are also
$2n_\star (\beta)$ buckled solutions, namely
\begin{equation}
\label{soluz.tutte} u_{n}^{\pm }(x)\,=\,A_{n}^{\pm }\,\sin (n\pi
x)\,,\quad\,n=1,2,\ldots ,n_{\star }
\end{equation}
with
\begin{equation}
\label{staticresponse} A_{n}^{\pm }=\pm \,\frac{1}{n\,\pi
}\sqrt{-\,2\,\left[ \beta + \mu_n(k)\right]}.
\end{equation}
\end{itemize}
\end{theorem}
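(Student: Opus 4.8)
The plan is to exploit the fact that the nonlocal coefficient $\int_0^1|u'(\xi)|^2\,d\xi$ is a \emph{scalar} for any fixed solution $u$, which effectively linearizes the problem. Writing $\gamma=\beta+\int_0^1|u'(\xi)|^2\,d\xi$, equation \eqref{STATIC} with $f=0$ becomes the linear boundary value problem
\begin{equation}\nonumber
u''''-\gamma\,u''+k\,u=0,\qquad u(0)=u(1)=u''(0)=u''(1)=0.
\end{equation}
Since the hinged boundary conditions are met exactly by the eigenfunctions $\sin(n\pi x)$ of $\partial_{xxxx}$, I would expand the solution as $u(x)=\sum_{n\in\N}a_n\sin(n\pi x)$. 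Substituting and using $\lambda_n=n^4\pi^4$ yields the decoupled system $a_n\,[(n\pi)^4+\gamma\,(n\pi)^2+k]=0$ for every $n$. Hence $a_n\neq 0$ forces $\gamma=-\bigl(n^2\pi^2+k/(n^2\pi^2)\bigr)=-\mu_n(k)$.

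The crucial consequence is that, because $\gamma$ is a single number, the support $S=\{n:a_n\neq 0\}$ of any nontrivial solution must lie inside a single level set $\{n:\mu_n(k)=\mu\}$, with $\mu=-\gamma$. It then remains to close the problem through the self-consistency condition on the nonlocal term. A direct computation gives $\int_0^1|u'|^2\,d\xi=\tfrac12\sum_{n\in S}(n\pi)^2a_n^2$, so that $\gamma=\beta+\int_0^1|u'|^2\,d\xi$ reads
\begin{equation}\nonumber
\tfrac12\sum_{n\in S}(n\pi)^2 a_n^2=-(\beta+\mu).
\end{equation}
A nontrivial solution therefore exists only when $\beta+\mu<0$, i.e. $\beta<-\mu\le-\beta_c(k)$; this immediately yields the first assertion, namely that $\beta\ge-\beta_c(k)$ forces $u\equiv 0$.

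For $\beta<-\beta_c(k)$ I would then run the count over each admissible value $\mu$ of $\{\mu_n(k)\}$ with $\beta<-\mu$. If such a $\mu$ is simple, realized by a single index $n$, the constraint becomes $\tfrac12(n\pi)^2a_n^2=-(\beta+\mu_n)$, forcing $a_n=\pm\frac{1}{n\pi}\sqrt{-2[\beta+\mu_n(k)]}$ and producing exactly the two buckled states $u_n^{\pm}$ of \eqref{soluz.tutte}--\eqref{staticresponse}. If instead $\mu$ is non-simple, the same constraint defines an ellipse in the corresponding finite set of amplitudes, hence a continuum of solutions; in particular, when $k\in{\mathcal R}$ and $\beta<-\mu_m(k)$ the smallest non-simple value satisfies $\beta<-\mu_m$, so infinitely many solutions arise. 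In the complementary regime---either $k\notin{\mathcal R}$, or $k\in{\mathcal R}$ with $-\mu_m(k)\le\beta<-\beta_c(k)$---every value $\mu_n(k)<-\beta$ is simple (when $k\notin{\mathcal R}$ all the $\mu_n(k)$ are simple, while when $k\in{\mathcal R}$ any such $\mu_n(k)<-\beta\le\mu_m(k)$ lies strictly below the smallest non-simple value), so summing the two states contributed by each such index gives exactly $2n_\star(\beta)$ buckled solutions besides the null one.

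The main obstacle I expect is the bookkeeping of this last case analysis rather than any hard estimate: one must verify carefully that the indices contributing nonzero amplitudes are precisely those with $\mu_n(k)<-\beta$ (the boundary case $\mu_n=-\beta$ yielding $a_n=0$, hence no new solution), and that in the finite regime all such indices are simple, which is exactly the role played by the threshold $\mu_m(k)$. The decoupling afforded by the eigenbasis of $\partial_{xxxx}$ is what renders the whole multiplicity statement transparent.
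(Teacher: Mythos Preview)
Your argument is correct and shares the paper's overall strategy: freeze the nonlocal coefficient $\gamma=\beta+\int_0^1|u'|^2$, solve the resulting linear fourth--order problem, and then close through the self--consistency relation $\gamma=\beta+\tfrac12\sum (n\pi)^2a_n^2$. The execution, however, is genuinely different. The paper attacks the frozen linear ODE via its characteristic equation $\lambda^4-h\lambda^2+k=0$, performs a case analysis on the discriminant $h^2-4k$, and only after imposing the hinged boundary conditions recovers that the admissible frequencies are $n\pi$ with $h=-\mu_n(k)$; the coincidence $\mu_\ell(k)=\mu_n(k)$ then emerges from matching the two roots $\chi_{1,2}$. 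You instead expand directly in the eigenbasis $\{\sin(n\pi x)\}$ of $\partial_{xxxx}$ with hinged conditions, which decouples the problem mode by mode and makes it immediate that all active indices must share a single value $\mu_n(k)=-\gamma$. Your spectral route is cleaner: it bypasses the sign discussion on $h$ and the case $|h|<2\sqrt{k}$ entirely, and it renders the dichotomy ``simple level set $\Rightarrow$ two solutions / non--simple level set $\Rightarrow$ an ellipse of solutions'' transparent from the outset. The paper's ODE approach, on the other hand, does not rely on completeness of the eigenfunctions and stays closer to a direct computation. Both arrive at the same closure equation and the same bookkeeping on $n_\star(\beta)$ and $\mu_m(k)$, which you handle correctly, including the boundary case $\mu_n=-\beta$.
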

\begin{proof}
Clearly, $u=0$ is a solution to (\ref{STATIC}) in the homogeneous
case for all $k$ and $\beta$. To find a  nontrivial solution $u$,
we put $h\,=\,\beta + \int\limits_{0}^{1}\left| u^{\prime
}(\xi)\right| ^{2}d\xi$\,, so that $u$ solves the differential
equation
\begin{equation*}
u^{\prime \prime \prime \prime }\,-\,h\,u^{\prime \prime}+ku=0 \ ,
\quad \quad h \in \mathbb{R}\,,\, k>0\,.
\end{equation*}
Letting $\lambda^{2}\,=\chi $, the characteristic equation
$$
\lambda ^{4}\,-\,h\,\,\lambda ^{2}+k=0
$$
admits solutions in the form
\begin{equation}
\label{SS2} \chi_{1,2}=\frac{h\pm \sqrt{h^{2}-4k}}{2} \ .
\end{equation}
As a consequence, taking into account the hinged boundary
conditions, we obtain
\begin{itemize}
\item if $h\geq 2 \sqrt{k}$, then $ \chi_{1,2} \in \mathbb{R}^+$ and
all corresponding values of $\lambda$ are real, hence
$u \equiv 0$; \\

\item if $h \leq -2\sqrt{k}$, then $\chi_{1,2} \in \mathbb{R}^-$
and $\lambda_{1,2}=\pm \sqrt{| \chi _{1}| }i$,
$\lambda_{3,4}=\pm\sqrt{| \chi _{2}| }i$; hence
$$
u=2\,i\,a\,\sin \omega _{1}\,x+2\,i\,b\,\sin \omega _{2}\,x,
$$
\end{itemize}

\noindent where $a$ and $b$ are suitable constants, while
\begin{equation}
\label{OMEGA} \omega _{1}=\sqrt{\left| \chi _{1}\right| }=n\pi \,,
\qquad
 \omega _{2}=\sqrt{\left| \chi _{2}\right| }=\ell\pi\,,\qquad n,\ell\in\N .
\end{equation}
Now, letting $2\,a\,=\,-i\tilde{A}$ and $2\,b\,=\,-i\tilde{B}\,$,
we can write the solution into the form
\begin{equation}
\label{SOLUZIO} u=\tilde{A}\,\sin n\pi\,x+\tilde{B}\,\sin
\ell\pi\,x .
\end{equation}
Moreover, from (\ref{SS2}), (\ref{OMEGA}) we obtain
\begin{equation}
\label{ValoreH1} \frac1{2} \big(-h-\sqrt{h^{2}-4k}\,\big)=n^{2}\pi
^{2} \Rightarrow h=-\mu_n(k)\,=\,\beta + \int\limits_{0}^{1}\left|
u^{\prime }(\xi)\right| ^{2}d\xi
\end{equation}
\begin{equation}
\label{ValoreH2} \frac1{2}
\big(-h+\sqrt{h^{2}-4k}\,\big)=\ell^{2}\pi ^{2} \Rightarrow
h=-\mu_{\ell}(k)=\,\beta + \int\limits_{0}^{1}\left| u^{\prime
}(\xi)\right| ^{2}d\xi
\end{equation}
from which it follows $\mu_{\ell}(k)=\mu_n(k)$. In order to
represent $\int\limits_{0}^{1}\left| u^{\prime }(\xi)\right|
^{2}d\xi $ in explicit form, we are lead to consider two
occurrences.
\begin{itemize}
\item
 Let $k\not\in{\mathcal R}$. In this case, from (\ref{ValoreH1}) and (\ref{ValoreH2})  it follows $\ell=n$ and, by (\ref{SOLUZIO}),
the following equality holds
$$
\int\limits_{0}^{1}\left|
u^{\prime }(\xi)\right| ^{2}d\xi =\frac{1}{2}\left(
\tilde{A}+\tilde{B}\right) ^{2}n^{2}\pi ^{2}=-\beta-\mu_n(k),
$$
\noindent so that, letting $A=\tilde{A}+\tilde{B}$,
(\ref{staticresponse}) can be  easily obtained. Of course, such
nontrivial solutions exist if and only if
$\displaystyle-\beta>\beta_c(k)= \min_{n\in\N}\mu_n(k) $. \item
 Let $k\in{\mathcal R}$. Then,
 $\mu_{\ell}(k)=\mu_n(k)$ for some $\ell< n$. In this case, from (\ref{SOLUZIO}) we obtain the equality
$$
\int\limits_{0}^{1}\left|
u^{\prime }(\xi)\right| ^{2}d\xi =\frac{1}{2}
\tilde{A}^{2}n^{2}\pi
^{2}+\frac{1}{2}\tilde{B}^{2}\ell^{2}\pi^{2}, \quad \ell\neq n,
$$
and (\ref{ValoreH1})-(\ref{ValoreH2}) cannot uniquely determine
the values of $\tilde{A}$ and  $\tilde{B}$. Accordingly,
(\ref{SOLUZIO}) represent infinitely many solutions provided that
$-\beta>\mu_m(k)$.

  \end{itemize}
\end{proof}

\begin{remark}
Assuming $k=0$ we recover the results of \cite{NE} and
\cite{CZGP}.
\end{remark}

\begin{figure}[h]
\begin{center}
 \psfrag{O}{$O$}
 \psfrag{k}{$k$}
 \psfrag{l}{\tiny{$-\pi^2$}}
\psfrag{m}{\scriptsize{$-\disp\frac{k}{\pi^2}-\pi^2$}}
 \psfrag{A}{$A$}
 \psfrag{A1+}{$A_1^+$}
 \psfrag{A2+}{$A_2^+$}
 \psfrag{A1-}{$A_1^-$}
 \psfrag{A2-}{$A_2^-$}
 \psfrag{u1+}{$u_1^+$}
 \psfrag{u1-}{$u_1^-$}
 \psfrag{u2-}{$u_2^-$}
 \psfrag{u2+}{$u_2^+$}
  \psfrag{b}{$\beta$}
   \psfrag{n}{\tiny{$-\disp\frac{k}{4\pi^2}-4\pi^2$}}
   \psfrag{p}{\tiny{$-\disp\frac{k}{9\pi^2}-9\pi^2$}}
\centerline{\includegraphics[width=12cm]{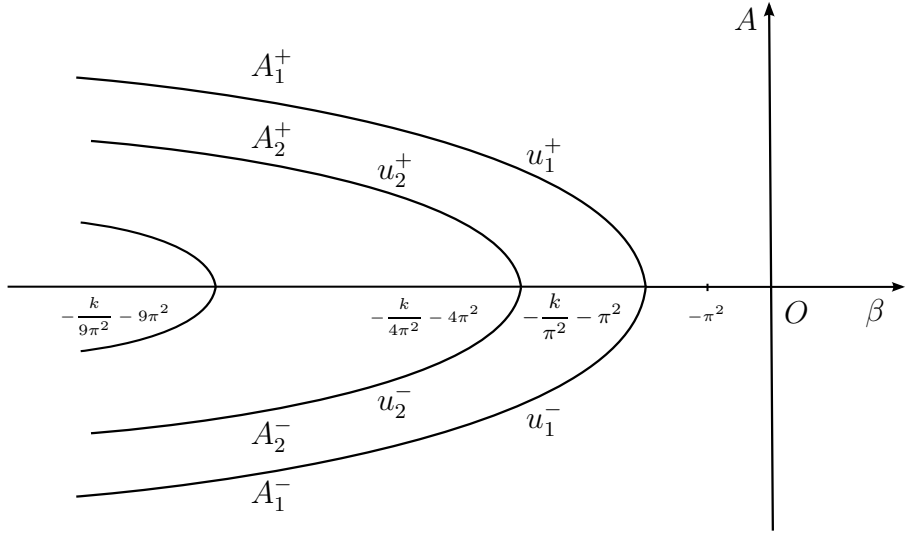}}
\caption{ A sketch of the nonlinear static response of the beam
when $k<4\pi^4$.} \label{figureresponse}
\end{center}
\end{figure}
\noindent When  $k\not\in{\mathcal R}$ the set of all stationary
states is finite and will be denoted by $\mathcal{S}_0$. Depending
on the values of $k$ and $\beta$, the pairs of solutions branch
from the unbuckled state $A^\pm_n\,=\,0$ at the critical value
$\beta =  - \beta_{c}(k)$, i.e., the beam can buckle in either the
positive or negative directions of the transverse displacement.
These branches exist for all $\beta <  - \beta_{c}(k)$ and
$A^\pm_n$ are monotone increasing functions of $|\beta|$. For each
$n$, (\ref{staticresponse}) admits real (buckled) solutions
$A_{n}^{\pm }$ if and only if $\beta < -\mu_n$. When $k<4\pi^4$,
for any $\beta$ in the interval
$$ - \frac{k}{(n+1)^2\,\pi^2}-(n+1)^2\,\pi^2<\beta < - \frac{k}{n^2\,\pi^2}-n^2\,\pi^2 \ ,$$
the set $\mathcal{S}_0$ contains exactly $2n_{\star}+1$ stationary
points: the null solution and the pairs of solutions represented
by (\ref{soluz.tutte}). These properties are sketched in
Fig.\ref{figureresponse} (see also \cite{NE}).

\section{The Absorbing Set}

\noindent We will consider an abstract version of problem~(\ref{PROB}). To this aim, let $H_0$ 
be a real Hilbert space, whose inner product and norm are denoted
by $\l\cdot,\cdot\r$ and $\|\cdot\|$, respectively. Let
$A:\D(A)\Subset H_0\to H_0$ be a strictly positive selfadjoint
operator. We denote by $\lambda_1>0$ the first eigenvalue of $A$.
For $\ell\in\R$, we introduce the scale of Hilbert spaces
$$
H_\ell=\D(A^{\ell/4}),\qquad \l u,v\r_\ell=\l
A^{\ell/4}u,A^{\ell/4}v\r,\qquad \|u\|_\ell=\|A^{\ell/4}u\|.
$$
In particular, $H_{\ell+1}\Subset H_\ell$ and the following scale
of Poincar\'e inequalities holds
\begin{equation}
\label{POINCARE} \sqrt{\lambda_1}\,\|u\|_\ell^2\leq
\|u\|_{\ell+1}^2.
\end{equation}
Finally, we define the product Hilbert spaces
$$
\H_\ell=H_{\ell+2}\times H_\ell.
$$

\subsection*{The abstract problem}
For $\beta\in\R$ and $f\in H_{0}$, we investigate the following
evolution equation on $\H_0$
\begin{equation}
\label{BASE} \ptt u+ Au+ \big(\beta+\|u\|^2_1\big)A^{1/2}u+\delta
\partial_t u + k\,u = f,
\end{equation}
in the unknowns $u(t):[0,\infty)\to H_2$ and $\pt
u(t):[0,\infty)\to H_0$, with initial conditions
$$
(u(0),\partial_t u(0))=(u_0,u_1)=z_0\in\H_0.
$$

\noindent Problem~(\ref{PROB}) is just a particular case of the
abstract system (\ref{BASE}), obtained by setting $H_0=L^2(0,1)$
and $A$ the realization of $\partial_{xxxx}$ in $H_0$.

Eq.~(\ref{BASE}) generates a strongly continuous semigroup (or
dynamical system) $S(t)$ on $\H_0$: for any initial data
$z_0\in\H_0$, $S(t)z_0$ is the unique weak solution to
(\ref{BASE}), with related (twice the) energy given by
$$\E(t)=\|S(t)z_0\|^2_{\H_0}=\,\|u(t)\|^2_2+\|\pt u(t)\|^2.$$
Besides, $S(t)$ continuously depends on the initial data. We omit
the proof of these facts, which can be demonstrated  by means of a
Galerkin procedure (e.g., following the lines of \cite{IB}). The
crucial point in applying this technique is to have uniform energy
estimates on any finite time-interval. As it will be apparent,
these estimates are easily implied by the uniform inequalities on
$(0,+\infty)$ established in the subsequent sessions.

Now, we prove the existence of the so called absorbing set for the
flow generated by problem (\ref{BASE}), that is, a bounded set
into which every orbits eventually enters. Such a set is defined
as follows:

\begin{definition} Let $B(0,R)$ be the open ball with center $0$ and radius $R>0$ in $\mathcal{H}_0$. A bounded set
$\mathcal{B}_{\H_0} \subset \mathcal{H}_0$ is called an {\em
absorbing set } for the semigroup  $S(t)$ if, for any $R>0$ and
any initial value $z_0 \in B(0,R)$, there exists $t_0(R)>0$ such
that
$$z(t) \in \mathcal{B}_{\H_0}\quad \forall \,\,t\geq t_0,$$
where $z(t)\,=\,S(t)z_0$ is the solution starting from $z_0$.
\end{definition}

The main result of this section follows from two lemmas, involving
the functionals $\L(t)$ and $\Phi(t)$ defined as
\begin{equation}
\label{L_FUNZIONALE} \L(t)\,=\,\E(t)+\frac{1}{2}\left( \beta
+\left\| u(t)\right\| _{1}^{2}\right) ^{2}+k\,\left\| u(t)\right\|
^{2}\geq\E(t)\geq 0,
\end{equation}
\begin{equation}
\label{PHI_FUNZIONALE} \Phi (t)\,=\,\L(t)+\varepsilon \left\langle
\partial_t u\,,\,u\right\rangle .
\end{equation}
\begin {lemma}
\label {lemma nuovo} For all $t>0$ and $z_0\in \H_0$ with
$\|z_0\|_{\H_0}\leq R$, there exists a positive constant $C$
(depending on $\left\| f\right\|$ and $R$) such that
\begin{equation}
\label{energy} \E(t)\leq C.
\end{equation}
\end{lemma}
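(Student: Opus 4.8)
The plan is to control $\E(t)$ through the auxiliary functionals $\L$ and $\Phi$ of (\ref{L_FUNZIONALE})--(\ref{PHI_FUNZIONALE}), deriving a dissipative differential inequality of the form $\frac{d}{dt}\Phi+c\,\Phi\leq C'$ and then invoking Gronwall's lemma. First I would test the abstract equation (\ref{BASE}) with $2\pt u$ in $H_0$. Using $2\l\ptt u,\pt u\r=\frac{d}{dt}\|\pt u\|^2$, $2\l Au,\pt u\r=\frac{d}{dt}\|u\|_2^2$, $2(\beta+\|u\|_1^2)\l A^{1/2}u,\pt u\r=\frac{d}{dt}\frac12(\beta+\|u\|_1^2)^2$ and $2k\l u,\pt u\r=k\frac{d}{dt}\|u\|^2$, the elastic and nonlinear contributions reassemble exactly into the extra terms of $\L$, yielding the energy identity $\frac{d}{dt}\L=-2\delta\|\pt u\|^2+2\l f,\pt u\r$. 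This already provides dissipation in $\pt u$ but none in $u$, so on its own it controls $\L$ only up to linear growth in $t$.

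To recover dissipation in the $u$-components I would differentiate the cross term and substitute $\ptt u$ from (\ref{BASE}), obtaining
\[\frac{d}{dt}\l\pt u,u\r=\|\pt u\|^2-\|u\|_2^2-(\beta+\|u\|_1^2)\|u\|_1^2-\delta\l\pt u,u\r-k\|u\|^2+\l f,u\r.\]
The crucial algebraic point is the identity $(\beta+\|u\|_1^2)\|u\|_1^2\geq\frac12(\beta+\|u\|_1^2)^2-\frac12\beta^2$ (equivalent, after completing the square, to $\frac12\|u\|_1^4\geq0$), which turns the indefinite nonlinear term into a genuinely negative multiple of the quartic term of $\L$, up to the harmless additive constant $\frac12\beta^2$. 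Forming $\frac{d}{dt}\Phi=\frac{d}{dt}\L+\varepsilon\frac{d}{dt}\l\pt u,u\r$, the $\|\pt u\|^2$, $\|u\|_2^2$, $(\beta+\|u\|_1^2)^2$ and $k\|u\|^2$ contributions then all carry the correct sign, while the indefinite terms $2\l f,\pt u\r$, $-\varepsilon\delta\l\pt u,u\r$ and $\varepsilon\l f,u\r$ are absorbed, via Young's inequality and the Poincar\'e inequalities (\ref{POINCARE}), into a fraction of the available dissipation plus a constant depending only on $\|f\|$, $\beta$, $k$, $\delta$ and $\lambda_1$. Choosing $\varepsilon$ small (so that $\varepsilon<\delta$ and the surplus $\pt u$-dissipation suffices to swallow the cross term) this gives $\frac{d}{dt}\Phi\leq-c_0\L+C'$ for some $c_0>0$.

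Finally I would check that $\Phi$ and $\L$ are equivalent: by (\ref{POINCARE}) one has $\|u\|^2\leq\lambda_1^{-1}\|u\|_2^2$, whence $|\varepsilon\l\pt u,u\r|\leq C\varepsilon\,\E\leq C\varepsilon\,\L$ and therefore $(1-C\varepsilon)\L\leq\Phi\leq(1+C\varepsilon)\L$ for $\varepsilon$ small. Consequently $\frac{d}{dt}\Phi\leq-c\,\Phi+C'$ for some $c>0$, and Gronwall's lemma yields $\Phi(t)\leq\Phi(0)e^{-ct}+C'/c$. Since $\E\leq\L\leq(1-C\varepsilon)^{-1}\Phi$ and $\Phi(0)$ is bounded by a constant depending only on $R$ (again through (\ref{POINCARE}), as $\|u_0\|_1^2\leq\lambda_1^{-1/2}\|u_0\|_2^2\leq\lambda_1^{-1/2}R^2$), the asserted bound $\E(t)\leq C$ follows with $C=C(\|f\|,R)$. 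I expect the main obstacle to be the sign control in the second paragraph: keeping the indefinite nonlinear and forcing terms beneath the negative dissipation, uniformly in $t$ and for \emph{every} real $\beta$, rests precisely on the completing-the-square identity above, together with a choice of $\varepsilon$ delicate enough to spoil neither the $\pt u$-dissipation nor the equivalence $\Phi\simeq\L$.
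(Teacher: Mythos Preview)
Your proof is correct, but it takes a genuinely different route from the paper's.

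The paper's argument for this lemma is much more elementary: it introduces the Lyapunov functional $\F(t)=\L(t)-2\l f,u(t)\r$, computes directly that $\frac{d\F}{dt}=-2\delta\|\pt u\|^2\leq 0$, so $\F(t)\leq\F(0)\leq C_0(R,\|f\|)$, and then observes that $\F(t)\geq(1-\varepsilon C_1)\E(t)-\varepsilon^{-1}\|f\|^2$ via Young and Poincar\'e. No differential inequality, no Gronwall, no cross term.

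What you do instead is essentially the paper's proof of Theorem~\ref{th_ass.set} (existence of the absorbing set), transplanted to establish Lemma~\ref{lemma nuovo}. This is legitimate and self-contained: your equivalence $(1-C\varepsilon)\L\leq\Phi\leq(1+C\varepsilon)\L$ is cleaner than the paper's Lemma~\ref{lemma_34} (which compares $\Phi$ with $\E$ and an additive constant, and in fact \emph{uses} Lemma~\ref{lemma nuovo} through \eqref{stima2}), so no circularity arises. The completing-the-square identity you highlight is exactly what makes the nonlinear term harmless for all real $\beta$. The trade-off is that you work harder than necessary for the stated conclusion, but you gain the exponential-in-time approach to a fixed bound---i.e.\ you have already proved the absorbing set---whereas the paper separates boundedness (this lemma, via monotonicity) from absorption (Theorem~\ref{th_ass.set}, via $\Phi$).
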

\begin{proof}
If we consider the functional
$$
\F(t)\,=\,\L(t) - 2\left\langle f\,,\,u(t)\right\rangle ,
$$
from the energy identity
\begin{equation} \nonumber
\begin{split}
\frac{d \E}{dt} =-2\left\langle \left( \beta +\left\| u\right\|
_{1}^{2}\right) A^{\frac{1}{2}}u+ku+\delta \,\partial_t
u-f\,,\,\partial_t u\right\rangle
\end{split}
\end{equation}
 it easily follows the decreasing monotonicity of $\F$
\begin{equation} \nonumber
\begin{split}
\frac{d \F}{dt} &=\frac{d \E}{dt}+2\left( \beta +\left\| u\right\|
_{1}^{2}\right) \left\langle
A^{\frac{1}{4}}u\,,A^{\frac{1}{4}}\partial_t u\right\rangle
+2k\left\langle u,\partial_t u\right\rangle - 2\left\langle
f\,,\,\partial_t u\right\rangle\\ &= -2\delta \left\| \partial_t
u\right\| ^{2}\leq 0
\end{split}
\end{equation}
and then
\begin{equation} \nonumber
\begin{split}
\F(t) \leq \F(0)\leq C_0(R, \|f\|).
\end{split}
\end{equation}
Taking into account that
$$\left\| u(t)\right\| ^{2} \leq \frac{1}{\lambda_1}\left\| u(t)\right\|_{2} ^{2} \leq \frac{1}{ \lambda_1}\E(t) = C_1\E(t) ,$$
we obtain the estimate
$$\F(t) \geq \E(t) - 2\left\langle f\,,\,u(t)\right\rangle \geq \E(t) - \frac{1}{\varepsilon} \left\| f\right\| ^{2} - \varepsilon\left\| u(t)\right\| ^{2} \geq
(1 - \varepsilon C_1)\E(t) - \frac{1}{\varepsilon} \left\|
f\right\| ^{2}.$$ Finally, fixing $\varepsilon < \frac{1}{C_1}$,
we have
$$\E(t)\leq \frac{1}{1 - \varepsilon C_1}(\F(0) +  \frac{1}{\varepsilon} \left\| f\right\| ^{2}) \leq \frac{1}{1 - \varepsilon C_1}(C_0(R, \|f\|) +  \frac{1}{\varepsilon} \left\| f\right\| ^{2}) = C.$$
\end{proof}

\begin{lemma}
\label{lemma_34} For any given $z \in \mathcal{H}_0$ and for any
$t>0$ and $\beta \in \mathbb{R}$, when $\varepsilon$ is small
enough there exist three positive constants, $m_0 $, $m_1 $  and
$m_2$, independent of $t$ such that
\begin{equation}
m_{0}\,\mathcal{E}(t)\leq \Phi(t)\leq
m_{1}\,\mathcal{E}(t)\,+\,m_2.
\end{equation}
\end{lemma}
\begin{proof}
In order to prove the lower inequality we must observe that, by
Young inequality
$$
\left| \left\langle \partial_t u\,,\,u\right\rangle \right| \geq
-\frac{1}{2}\,\left\| \partial_t u\right\| ^{2}-\frac{1}{2}\left\|
u\right\| ^{2} \ ;
$$
hence, we obtain
$$
\Phi (t)\, \geq
\,\|u(t)\|^2_2+\left(1-\frac{\varepsilon}{2}\right)\|\pt
u(t)\|^2+\frac{1}{2}\left( \beta +\left\| u(t)\right\|
_{1}^{2}\right)
^{2}+\left(k\,-\frac{\varepsilon}{2}\right)\,\left\| u(t)\right\|
^{2} \ .
$$
If we choose $\varepsilon$ small enough to satisfy $\varepsilon<2$
and $\varepsilon<2k$, then we have
\begin{equation} \label{stima}
\Phi (t)\, \geq m_0 \, \L(t) \geq m_0 \, \E(t),
\end{equation}
where $m_0\,=\,\min\{1- \frac{\varepsilon}{2}, 1-
\frac{\varepsilon}{2k}\}$.

The upper inequality can be obtained using the definition of
$\Phi$ and applying the estimate
\begin{equation} \label{aggiunta}
\left| \left\langle \partial_t u\,,\,u\right\rangle \right| \leq
\frac{1}{2}\,\left\| \partial_t u\right\|
^{2}+\frac{1}{2\lambda_1}\left\| u\right\|_2^{2} \ .
\end{equation}
First, we can write
\begin{equation} \nonumber
\begin{split}
\Phi (t) \leq
 \left[ 1+\frac{1}{ \,\lambda _{1}}\left(
k+\frac{\varepsilon }{2}\right) \right] \left\| u(t)\right\|
_{2}^{2}+\left( \frac{\varepsilon }{2}+1\right) \left\|
\partial_t u(t)\right\| ^{2}+\frac{1}{2}\left( \beta +\left\| u(t)\right\|
_{1}^{2}\right) ^{2}.
\end{split}
\end{equation}
Then, by (\ref{POINCARE}) and Lemma \ref{lemma nuovo} we infer
\begin{equation} \label{stima2}
\left( \beta +\left\|u\right\|_1 ^{2}\right)  \leq \left| \beta
\right| +\frac{1}{ \sqrt{\lambda _{1}} }\,C = \bar {C},
\end{equation}
so that we finally obtain
\begin{equation} \nonumber
\begin{split}
\Phi (t) &\leq \left[ 2+\frac{1}{ \,\lambda _{1}}\left(
k+\frac{\varepsilon }{2}\right) +\frac{\varepsilon }{2}\right]
\E(t) +\frac{1}{2}\bar {C}^2 \, = \, m_1\,\E(t) + m_2 \ .
\end{split}
\end{equation}
\end{proof}

\begin{theorem}
\label{th_ass.set} For any $\beta \in \mathbb{R}$, there exists an
absorbing set in $\mathcal{H}_0$ for the dynamical system
$\left(S(t), \mathcal{H}_0 \right)$.
\end{theorem}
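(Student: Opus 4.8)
The plan is to produce a uniform (in the initial data) ultimate bound on the energy $\E(t)$ by establishing a dissipative differential inequality for the auxiliary functional $\Phi$ and then invoking the two lemmas just proved. Concretely, I will show that there are constants $c>0$ and $C>0$, independent of $t$ and of $z_0$, such that
$$\frac{d}{dt}\Phi(t)+c\,\Phi(t)\leq C.$$
Granting this, Gronwall's lemma gives $\Phi(t)\leq\Phi(0)\,e^{-ct}+\frac{C}{c}$, and the lower bound $m_0\E(t)\leq\Phi(t)$ of Lemma \ref{lemma_34} yields $\E(t)\leq\frac{1}{m_0}\Phi(0)e^{-ct}+\frac{C}{m_0 c}$. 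Since $\Phi(0)$ is controlled directly by $\|z_0\|_{\H_0}=R$ (apply Poincar\'e (\ref{POINCARE}) and Young to the definition (\ref{PHI_FUNZIONALE}), bounding $\|u_0\|_1^2$, $\|u_0\|^2$ and $\l u_1,u_0\r$ by powers of $R$), for every $R$ there is a time $t_0(R)$ past which the transient term is dominated by $\frac{C}{m_0 c}$; hence $\E(t)\leq\rho_0^2:=\frac{2C}{m_0 c}$ for all $t\geq t_0(R)$. Recalling that $\E(t)=\|S(t)z_0\|^2_{\H_0}$, the closed ball $\B_{\H_0}=\{z\in\H_0:\|z\|_{\H_0}\leq\rho_0\}$ is then the desired absorbing set: its radius is independent of $R$, while the entry time $t_0(R)$ is not, which is exactly what the definition requires.

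To obtain the differential inequality, first I would differentiate $\L$ along trajectories. Using the energy identity together with the explicit derivatives of $\frac{1}{2}(\beta+\|u\|_1^2)^2$ and $k\|u\|^2$, the nonlinear and elastic contributions cancel exactly against the corresponding terms in $\frac{d\E}{dt}$, leaving the clean identity $\frac{d}{dt}\L=-2\delta\|\pt u\|^2+2\l f,\pt u\r$. Next I would differentiate the correction $\varepsilon\l\pt u,u\r$; substituting $\ptt u$ from (\ref{BASE}) produces the coercive term $-\varepsilon\|u\|_2^2$ together with $-\varepsilon k\|u\|^2$, $\varepsilon\|\pt u\|^2$, $-\varepsilon(\beta+\|u\|_1^2)\|u\|_1^2$ and lower-order cross terms in $f$ and in $\l\pt u,u\r$. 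Collecting everything and adding $c\Phi$, the coefficients $-(2\delta-\varepsilon-c)\|\pt u\|^2$, $-(\varepsilon-c)\|u\|_2^2$ and $-(\varepsilon-c)k\|u\|^2$ are made strictly dissipative by fixing first $\varepsilon$ small (so that $\varepsilon<2\delta$ and the hypotheses of Lemma \ref{lemma_34} hold) and then $c$ with $0<c<\varepsilon$ and $c<2\delta-\varepsilon$; every $f$- and $\l\pt u,u\r$-cross term (including the extra $c\varepsilon\l\pt u,u\r$ coming from $c\Phi$) is absorbed into these negative quadratics by Young's inequality, leaving only a constant remainder of the form $\frac{1}{\eta}\|f\|^2$.

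The main obstacle is the nonlinear term $-\varepsilon(\beta+\|u\|_1^2)\|u\|_1^2$, which for a compressive load $\beta<0$ is not sign-definite and cannot simply be discarded. Here the inclusion of the quartic $\frac{1}{2}(\beta+\|u\|_1^2)^2$ in $\L$ is decisive: writing $y=\|u\|_1^2\geq0$, the quantity $-\varepsilon(\beta+y)y+\frac{c}{2}(\beta+y)^2$, whose second piece is furnished by $c\Phi$, equals the quadratic $(\tfrac{c}{2}-\varepsilon)y^2+(c-\varepsilon)\beta y+\tfrac{c}{2}\beta^2$ in $y$, which opens downward as soon as $c\leq2\varepsilon$ (automatic since $c<\varepsilon$), hence is bounded above by a constant depending only on $\beta$, $\varepsilon$, $c$. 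This is precisely the mechanism that lets the estimate hold for every $\beta\in\R$, including arbitrarily large compression, and it completes the derivation of $\frac{d}{dt}\Phi+c\Phi\leq C$ and thus of the absorbing set.
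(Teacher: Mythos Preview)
Your proposal is correct and follows essentially the same route as the paper: differentiate $\Phi$, add a multiple of $\Phi$, exploit the quartic $\tfrac12(\beta+\|u\|_1^2)^2$ inside $\L$ to tame the sign-indefinite nonlinear term, and close with Gronwall and Lemma~\ref{lemma_34}. The only technical difference is that the paper takes the multiple equal to $\varepsilon$, so the nonlinear contribution collapses exactly to $-\tfrac{\varepsilon}{2}\|u\|_1^4+\tfrac{\varepsilon}{2}\beta^2$ but the $\|u\|_2^2$-coercivity is lost, and the resulting $\|u\|_2^2$-terms from the cross estimates are reabsorbed via $m\varepsilon^2\E\le \tfrac{m\varepsilon^2}{m_0}\Phi$; you instead take $c<\varepsilon$, keep a margin $-(\varepsilon-c)\|u\|_2^2$ to swallow the cross terms directly, and bound the slightly messier downward parabola in $y=\|u\|_1^2$ by a constant---either variant yields $\frac{d\Phi}{dt}+\varpi\Phi\le C$ for small $\varepsilon$.
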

\begin{proof}

By virtue of (\ref{L_FUNZIONALE}) we have
\begin{equation} \nonumber
\begin{split}
\frac{d \L}{dt} &=\frac{d \E}{dt}+2\left( \beta +\left\| u\right\|
_{1}^{2}\right) \left\langle
A^{\frac{1}{4}}u\,,A^{\frac{1}{4}}\partial_t u\right\rangle
+2k\left\langle u,\partial_t u\right\rangle = %
\\
&= -2\delta \left\| \partial_t u\right\| ^{2}+2\left\langle
f\,,\,\partial_t u\right\rangle.
\end{split}
\end{equation}
Moreover, by (\ref{PHI_FUNZIONALE})
\begin{equation} \nonumber
\begin{split}
\frac{d \Phi }{dt} &=\frac{d \L}{dt}+\varepsilon \left\langle
u\,,\partial_{tt} u\right\rangle +\varepsilon \left\| \partial_t
u\right\| ^{2}=-\left( 2\delta -\varepsilon \right) \left\|
\partial_t u\right\| ^{2}+2\left\langle f\,,\,\partial_t
u\right\rangle +
\\
& +\varepsilon \left[ - \left\| u\right\| _{2}^{2}-\delta
\,\left\langle \partial_t u,\,u\right\rangle -k\left\| u\right\|
^{2}-\beta \left\| u\right\| _{1}^{2}-\left\| u\right\|
_{1}^{4}+\left\langle f\,,\,u\right\rangle \right].
\end{split}
\end{equation}
A straightforward computation leads to the identity
\begin{equation}
 \label{deri-Phi}
\begin{split}
&\frac{d \Phi}{dt} +\varepsilon \Phi + 2\left( \delta -\varepsilon
\right) \left\| \partial_t u\right\| ^{2}+\frac{\varepsilon
}{2}\left\| u\right\| _{1}^{4}=
\\
&=2\left\langle f\,,\,\partial_t u\right\rangle +\varepsilon
\left[ \left\langle f\,,\,u\right\rangle -\left( \delta
-\varepsilon \right) \,\left\langle \partial_t u,\,u\right\rangle
\right] +\frac{\varepsilon }{2}\beta ^{2}\ .
\end{split}
\end{equation}
In the following we estimate the terms in the rhs of the previous
equality. By means of the H$\ddot{o}$lder and Young's inequalities
and \eqref{aggiunta}, we have
%
%
%
%
$$
\frac{d \Phi}{dt} +\varepsilon \Phi +\frac{1}{2}\left( 3\delta
-5\varepsilon \right)\left\|
\partial_t u\right\| ^{2}\leq
$$

$$
\frac{1}{2}\varepsilon \beta ^{2}+\left( \frac{1}{\varepsilon
}+\frac{1 }{2}\right) \left\| f\right\| ^{2}+
\frac{\varepsilon^2(\delta-\varepsilon+1)}{2\,\lambda_1} \left\|
u\right\| _{2}^{2}.
$$
\newline
Now, choosing $\varepsilon < \frac{3}{5}\,\delta$ and $\varepsilon
< 1 + \delta$ we find
$$
\frac{d \Phi}{dt} +\varepsilon \Phi \leq \frac{1}{2}\varepsilon
\beta ^{2}+\left( \frac{1}{\varepsilon }+\frac{1}{2}\right)
\left\| f\right\| ^{2}+m \, \varepsilon ^{2}\E,
$$
where
$$
 m\, =\frac{1}{2\lambda _{1}} \left( 1 + \delta
-\varepsilon \right)>0.
$$
Finally, from Lemma \ref{lemma_34} we obtain
$$
\frac{d \Phi}{dt} +\varpi \Phi \leq \frac{1}{2}\varepsilon \beta
^{2}+\left( \frac{1}{\varepsilon }+\frac{1}{2}\right) \left\|
f\right\| ^{2},
$$
where $\varpi =\varepsilon \left(1- \frac m{m_0}\varepsilon
\right) $ is positive provided that $\varepsilon<\frac{m_0} m$.
Since $\Phi$ is positive also when $\beta<0$ provided that
$\varepsilon$ is chosen small enough, using Gronwall lemma it
follows that
\begin{equation} \label{exp1}
\Phi \left( t\right) \leq \Phi \left( 0\right) \,e^{-\varpi
\,t}+\frac{1}{2}\varepsilon \beta ^{2}+\left( \frac{1}{\varepsilon
}+\frac{1 }{2}\right) \left\| f\right\| ^{2},
\end{equation}
and accordingly
\begin{equation} \label{exp2}
\E(t) \leq \frac 1 {m_0}\Phi(t) \leq \Gamma_0(R)
e^{-\varpi\,t}+\Gamma_1(\beta,\,\|f\|),
\end{equation}
where
$$\Gamma_0(R)\,=\,\frac{\Phi(0)}{m_0} \qquad {\rm{and}}
\qquad \Gamma_1(\beta,
\|f\|\,)\,=\,\frac{1}{m_0}\left[\frac{1}{2}\varepsilon \beta
^{2}+\left( \frac{1}{\varepsilon }+\frac{1 }{2}\right) \left\|
f\right\| ^{2}\right].$$
As a consequence, every ball $B\left( 0,\bar{R}\right) $ in $\H_0$
with radius $\bar{R}>1+\Gamma_1(\beta,\,\|f\|)$ can be chosen as
an absorbing set in that it verifies the following statement:
\newline
\it{for all $ z_{0}=(u_{0},\,u_{1}\,)\in B(0,R)$, there exists
$t_{0}(R)=\frac{1}{\varpi }\log \Gamma_0(R)$ such that for any
$t>t_{0},\,\,\,z(t)\,\in B(0,\bar{R}).$}\\
\end{proof}

\section{Exponential Stability}
A direct proof of the exponential decay of the energy seems out of
reach, so we exploit the equivalence between the energy
$\mathcal{E}$ and the functional
$$\bar{\Phi}= \Phi-\frac{1}{2}\beta^2$$
which can be proved to be exponentially
stable. The positivity of such a functional will be obtained as a
direct corollary of the next Lemma \ref{lemma}.

Recalling Theorem \ref{TH-stat.solut}, the set $\mathcal{S}_0$ of
stationary solutions reduces to a singleton when
\begin{equation}
\label{beta} \beta \geq -\beta_c(k)= -\min_{n\in\N}\mu_n(k),\qquad
\mu_n(k)= \sqrt{\lambda_n}\left[1+\frac{k}{\lambda_n}\right], \ \
\lambda_n= n^4\pi^4.
\end{equation}
It is worth noting that $\beta_c(k)$ is a piecewise-linear
function of $k$, in that $\beta_c(k)=\mu_1(k)$ when
$0<k<\sqrt{\lambda_{1}} \sqrt{\lambda_{2}}=4\pi^4$, and in general
$$\beta_c(k)=\mu_n(k)\quad  \hbox{when}\quad  \sqrt{\lambda_{n-1}}\sqrt{\lambda_{n}}<k< \sqrt{\lambda_{n}}\sqrt{\lambda_{n+1}}.$$
Unlike the case $k=0$, the energy $\E(t)$ does not decay
exponentially in the whole domain of the $(\beta,k)$ plain where
(\ref{beta}) is satisfied, but in a region which is strictly
included in it.

For futher purposes, let
\begin{equation} \nonumber
\bar{\beta}(k)= \left\{
\begin{array}{ll}
\beta_{c}(k) \quad \quad \quad \qquad 0<k\leq \lambda_1,
\\[1em]
2\sqrt{k} \quad \quad \quad \qquad k>\lambda_1.
\end{array}
\right.
\end{equation}
A picture of this function is given in Fig. \ref{fig}.

\begin{lemma} \label{lemma}
Let $\beta \in \mathbb{R}$, $k>0$ and
$$
Lu\,=Au\,+\,\beta A^{\frac{1}{2}}u\,+k\,u \, .
$$
There exists a real function $\nu\,=\,\nu(\beta,k)$ such that
$$\left\langle Lu\,,\,u\right\rangle \geq \nu \left\| u\right\| _{2}^{2},$$
where $\nu(\beta,k)>0$ if and only if $\beta >-\bar{\beta}(k)$.
\end{lemma}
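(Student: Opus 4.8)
The plan is to turn the quadratic form into the language of the scale of spaces $H_\ell$ and then into a one–variable minimization. First I would use that $A$ is selfadjoint and that $\|u\|_\ell=\|A^{\ell/4}u\|$ to rewrite
\[
\langle Lu,u\rangle=\langle Au,u\rangle+\beta\langle A^{1/2}u,u\rangle+k\langle u,u\rangle=\|u\|_2^2+\beta\|u\|_1^2+k\|u\|^2 ,
\]
since $\langle Au,u\rangle=\|A^{1/2}u\|^2=\|u\|_2^2$ and $\langle A^{1/2}u,u\rangle=\|A^{1/4}u\|^2=\|u\|_1^2$. When $\beta\geq0$ every term is nonnegative, so the bound holds with $\nu=1>0$, consistently with $\beta>-\bar\beta(k)$; the whole difficulty sits in the compressed regime $\beta<0$.

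For $\beta<0$ I would bound the (negative) middle term from above by the interpolation inequality $\|u\|_1^2=\langle A^{1/2}u,u\rangle\leq\|u\|_2\,\|u\|$ (Cauchy--Schwarz), and control the ratio by the Poincar\'e inequality \eqref{POINCARE}, which gives $\|u\|\leq\|u\|_2/\sqrt{\lambda_1}$. Writing $t=\|u\|/\|u\|_2\in(0,1/\sqrt{\lambda_1}]$ this yields
\[
\langle Lu,u\rangle\geq\|u\|_2^2\big(1-|\beta|\,t+k\,t^2\big),
\]
so the lemma holds with $\nu(\beta,k)=\min_{0<t\le 1/\sqrt{\lambda_1}}q(t)$, where $q(t)=k\,t^2-|\beta|\,t+1$. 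Via $s=1/t$ this is exactly $\min_{s\ge\sqrt{\lambda_1}}\big(1+\beta/s+k/s^2\big)$, i.e. the continuum relaxation of the sharp spectral constant $\min_n\big(1+\beta/\sqrt{\lambda_n}+k/\lambda_n\big)$; relaxing the discrete spectrum $\{\sqrt{\lambda_n}\}$ to the half-line $[\sqrt{\lambda_1},\infty)$ is precisely what replaces $\beta_c(k)$ by the smaller threshold $\bar\beta(k)$.

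The heart of the matter is locating the sign change of this minimum. The parabola $q$ opens upward, with vertex at $t^\star=|\beta|/(2k)$ and minimal value $1-\beta^2/(4k)$. I would split according to feasibility of $t^\star$: if $|\beta|\leq 2k/\sqrt{\lambda_1}$ the minimum is attained at $t^\star$, giving $\nu=1-\beta^2/(4k)$, which is positive exactly for $|\beta|<2\sqrt{k}$; if $|\beta|> 2k/\sqrt{\lambda_1}$ then $q$ is decreasing on the feasible interval, the minimum sits at the endpoint $t=1/\sqrt{\lambda_1}$, and $\nu=\big(\mu_1(k)-|\beta|\big)/\sqrt{\lambda_1}$, positive exactly for $|\beta|<\mu_1(k)=\beta_c(k)$.

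The main obstacle is the bookkeeping that glues these two regimes into the single threshold $\bar\beta(k)$, and this is where the comparison $k\lessgtr\lambda_1$ enters. Using the elementary inequality $\mu_1(k)=\sqrt{\lambda_1}+k/\sqrt{\lambda_1}\geq 2\sqrt{k}$ (AM--GM, with equality iff $k=\lambda_1$), one checks that $2\sqrt{k}\leq 2k/\sqrt{\lambda_1}$ iff $k\geq\lambda_1$. Hence for $k>\lambda_1$ the critical value $|\beta|=2\sqrt{k}$ falls in the first regime, so $\bar\beta(k)=2\sqrt{k}$, while for $k\leq\lambda_1$ it falls in the second, so $\bar\beta(k)=\mu_1(k)=\beta_c(k)$; the two formulas match at $k=\lambda_1$. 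This gives $\nu(\beta,k)>0$ if and only if $\beta>-\bar\beta(k)$. I expect the only delicate point to be verifying that $\nu$ stays strictly positive throughout the first regime up to its endpoint when $k<\lambda_1$ (there $\nu\geq 1-k/\lambda_1>0$), so that the sign change genuinely occurs at $\mu_1(k)$ and not earlier.
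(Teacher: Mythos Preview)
Your proof is correct and follows essentially the same route as the paper: rewrite $\langle Lu,u\rangle$ in the $H_\ell$-norms, use the interpolation inequality $\|u\|_1^2\le\|u\|_2\|u\|$ together with Poincar\'e to reduce to minimizing the parabola $\eta(m)=1+\beta m+km^2$ over $m\in[0,1/\sqrt{\lambda_1}]$. Your two-case split according to whether the vertex $m^\star=|\beta|/(2k)$ is feasible is a cleaner reorganization of the paper's four regions $R_1$--$R_4$ (which are based on the sign of the discriminant and the position of $k$ relative to $\lambda_1$), but the substance is identical.
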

\begin{proof}
Taking into account the inner product
$$\left\langle Lu\,,\,u\right\rangle = \left\| u\right\|
_{2}^{2}+\,\beta \left\| u\right\| _{1}^{2}+k\left\| u\right\|
^{2},$$
we put
$$
X=\left\| u\right\| _{2}\,;\,Y=\left\| u\right\|
_{1}\,;\,Z=\left\| u\right\|
$$
so that
$$\left\langle Lu\,,\,u\right\rangle = I(X,Y,Z)= \,X^{2}\,+\,\beta \,Y^{2}\,+k\,Z^{2}.$$
If $\beta \geq 0$, the desired inequality trivially holds true by choosing $\nu\,=\,1$.
\par
Letting $\beta<0$, by means of  the interpolation inequality $\left\|
u\right\| _{1}^{2}\leq \left\| u\right\| _{2}\left\| u\right\|$,
we obtain
$$I(X,Y,Z)\geq  \,X^{2}\,+\,\beta XZ+k\,Z^{2}=J(X,Z).$$
Hence the thesis can be rewritten as follows: find $\nu>0$ such that
$$J(X,Z)\mid _{D_{0}}\geq \nu \,X^{2},$$
where $D_{0}=\left\{ \left( X,Z\right) \,:\,X\geq 0,\,Z\geq
0,\,\,0\leq Z\leq X/{\sqrt{\lambda _{1}}}\right\}.$
\\[1em]
In order to prove this statement, we introduce the set
$$ M=\left\{ m\;\in
\mathbb{R}:\,0\leq m\leq \frac{1}{\sqrt{\lambda _{1} }}\right\},
$$ so that $D_{0}=\left\{ \left( X,Z\right) \,:Z= m X,\,X\geq 0
\,\, {\rm and} \,\, m \in M\right\}$ and the original problem
reads: find $\nu>0$ such that
$$\,J(X,Z)\mid _{Z=mX}\geq \nu _{m}\,X^{2}\,\,,\quad \forall m \in M$$
and
$$
\nu =\underset{m\in M}{\inf }\nu _{m}>0.
$$
\newline
We first observe that
$$\,J(X,Z)\mid _{Z=mX}=\left(1+\beta\,m+k\,m^{2}\right) X^{2}=\eta (m)X^{2}
$$
where $\eta (m)$ is a concave parabola. Hence, we have to find the
region in the $(\beta,k)$-plane where $\eta$ admits a strictly
positive minimum on $M$, that is to say $\eta _{\min }=\nu>0$. We
shall prove that $\nu>0$ if and only if  $\beta >
-\bar{\beta}(k)$. To this end, we split the proof into four steps.
\begin{itemize}

    \item Step 1. We consider the region $R_1=\left\{(\beta, k)\,:\,k > 0, \, -2\sqrt{k}<\beta <0\right\}$.\\
    Since the discriminant of the parabola is negative, $ \Delta _{\eta }=\beta ^{2}-4 k<0$,
    the value $\eta(m)$ is strictly positive for all $m$ in the closed interval $M$, and then $\nu>0$.\\

    \item Step 2. Let
    $R_2=\left\{(\beta, k): 0< k \leq \lambda_1, -\sqrt{\lambda_1}-{k}/{\sqrt{\lambda_1}}<\beta \leq -2 \sqrt{\,k}
    \right\}$.\\
    Observing that $\lambda_1<\sqrt{\lambda_1}\sqrt{\lambda_2}$, we infer that $\beta_c(k)=\mu_1=\sqrt{\lambda_1}+{k}/ {\sqrt{\lambda_1}}$.
    Now, since  $ \Delta _{\eta }=\beta ^{2}-4
    k \geq0$,  there exist  two solutions, $m_1\,,m_2\,\in \mathbb{R} $,
    of $\eta(m)=0$ so that $\eta$ changes sign on $\R$. In particular,
    it must be negative inside the open interval $(m_1\,,\,m_2)$
    because of $k>0$.
    On the other hand, $\eta$ is positive at the ends of $M$. Indeed, $\eta(0)\,=1$.
    In order to evaluate the sign of $\eta(1/{\sqrt{\lambda_1}})$
    we let $k\,=\,\rho\,\lambda_1$, with $0<\rho \leq 1$, and we obtain
    $$
    \eta\Big(\frac{1}{\sqrt{\lambda_1}}\Big)\,=\,\left( 1+\,\rho\right)\,
    +\,\frac{\beta}{\sqrt{\lambda_1}} \ .
    $$
    According to the definition of $R_2$
    \begin{equation}
    \label{rho}
    - \sqrt{\lambda _{1}}-\rho \,{\sqrt{\lambda _{1}}}<\beta <-2 \sqrt{\rho \lambda _{1}}
    \end{equation}
    therefore
    $$\frac{\beta }{\sqrt{\lambda _{1}}}+ \left( 1+\rho \right) >0,$$
    which implies
    $\eta \left(1/{\sqrt{\lambda _{1}}}\right) >0 .$
    Thus we infer that  either $(m_1\,,\,m_2)\subset M$, or $(m_1\,,\,m_2)$ is external to $M$.
    In the sequel we prove the latter occurrence by showing that the vertex of the parabola lies outside of $M$.
    Letting $m_*$ the abscissa of the vertex, it satisfies $\eta^\prime\mid _{m=m_*}= \beta + 2 k m_*\,=\,0$ so that
    $$m_*\,=\,-\frac{\beta}{2k}\,=\,-\frac{\beta}{2\,\rho\,\lambda_1}.$$
    So, in order to have $\eta(m) \mid_M\,\, >0$, it is enough to prove that
    $$-\frac{\beta }{2 \rho \lambda _{1}}=m_* \geq \frac{1}{\sqrt{\lambda_1}} \ ,$$
    which is trivially true because from (\ref{rho}) we have
        $$\beta \leq -2 \sqrt{\rho }\sqrt{\lambda _{1}}\leq -2 \rho \sqrt{\lambda _{1}}.$$

    \item Step 3. Let
     $R_3=\left\{(\beta\,,k)\,:\,k>\lambda_1,-\sqrt{\lambda_1}-{k}/{\sqrt{\lambda_1}}<\beta \leq -2 \sqrt{k}\right\}$.  When $\beta < -2 \sqrt{k}$,  as before we have  $ \Delta _{\eta }=\beta ^{2}-4
    k > 0$, so that $\eta$ is negative valued in the open interval $(m_1\,,\,m_2)$ delimited by solutions $m_1\,,m_2$ to the equation
    $\eta(m)=0$. Nevertheless,
    in this case  the vertex of the parabola lies inside $M$. Indeed, if we
     take $k\,=\,\sigma\,\lambda_1$, with $\sigma >
    1$, the abscissa of the vertex satisfies
    $$0<m_*=-\frac{\beta }{2 \sigma \lambda _{1}} <
    \frac{1}{\sqrt{\lambda_1}}\,,$$
    which holds true by virtue of the definition of  $R_3$, in that
        $$ -\beta<\left( 1+\sigma \right)  \sqrt{\lambda
    _{1}}<2 \sigma \sqrt{\lambda _{1}}.
    $$
    As a consequence, the minimum of $\eta$ on $M$ is $\eta(m_*)<0$. When $\beta = -2 \sqrt{k}$, we have $m_1=m_2=m_*$ and $\eta(m_*)=0$, so that the minimum of $\eta$ on $M$ vanishes. In both cases, $\eta _{\min }=\nu$ is not positive.\\

    \item Step 4. We consider the set $R_4=\left\{(\beta,k): k > 0 ,\beta <-\sqrt{\lambda_1}-{k}/{\sqrt{\lambda_1}}\right\}$.
          In this case $\eta(0)\,=1$, whilst
     $$\eta \left( \frac{1}{\sqrt{\lambda _{1}}}\right) \,=
     \frac{1}{\sqrt{\lambda _{1}}}\left( \sqrt{\lambda _{1}}+\beta +\frac{k}{\sqrt{\lambda _{1}}}\right) <0$$
     and the minimum of $\eta$ on $M$ cannot be positive.
\end{itemize}

\end{proof}

We are now in a position to prove the following
\begin{theorem} \label{exp-stab}
When $f\,=0$, the solutions to (\ref{PROB}) decay exponentially,
i.e.
$$
\E(t)\,\leq\,c_0\,\E(0)\,e^{- c t}
$$
with $c_0$ and $c$ suitable positive constants, provided that
$\beta\,>-\bar{\beta}(k)$.
\end{theorem}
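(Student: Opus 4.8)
The plan is to establish exponential decay by showing that the modified functional $\bar\Phi = \Phi - \frac12\beta^2$ is equivalent to the energy $\E$ and satisfies a Gronwall-type differential inequality with a genuinely negative exponential rate. The key enabling fact is Lemma \ref{lemma}: when $\beta > -\bar\beta(k)$, the operator $L$ is coercive, i.e. $\langle Lu, u\rangle \geq \nu\|u\|_2^2$ with $\nu = \nu(\beta,k) > 0$. This coercivity is precisely what replaces the missing lower bound that caused trouble for $\beta < 0$ in the absorbing-set argument, where the term $\frac12(\beta + \|u\|_1^2)^2$ was handled only crudely.

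First I would rewrite the functional $\bar\Phi$ so that the indefinite quadratic pieces are absorbed into $\langle Lu, u\rangle$. Since $\E(t) = \|u\|_2^2 + \|\pt u\|^2$ and, using $f=0$, the definition \eqref{L_FUNZIONALE} gives
\begin{equation} \nonumber
\bar\Phi = \|\pt u\|^2 + \|u\|_2^2 + \beta\|u\|_1^2 + k\|u\|^2 + \tfrac12\|u\|_1^4 + \eps\langle \pt u, u\rangle,
\end{equation}
so that $\bar\Phi = \|\pt u\|^2 + \langle Lu, u\rangle + \tfrac12\|u\|_1^4 + \eps\langle \pt u, u\rangle$. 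Applying Lemma \ref{lemma} together with the Young inequality \eqref{aggiunta} on the cross term $\eps\langle \pt u, u\rangle$, and choosing $\eps$ small relative to $\nu$ and to $1$, one checks that $\bar\Phi$ is bounded below by a positive multiple of $\E$; the upper bound $\bar\Phi \leq C\,\E$ follows from \eqref{aggiunta} and the fact that the quartic term $\|u\|_1^4$ is controlled on the absorbing set by Theorem \ref{th_ass.set}. This yields constants $c_1, c_2 > 0$ with $c_1\E \leq \bar\Phi \leq c_2\E$.

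Next I would differentiate $\bar\Phi$ along trajectories. Reusing the computation already carried out for \eqref{deri-Phi} (with $f=0$), the energy-type identity becomes
\begin{equation} \nonumber
\frac{d\bar\Phi}{dt} + \eps\bar\Phi + 2(\delta - \eps)\|\pt u\|^2 + \tfrac{\eps}{2}\|u\|_1^4 = -\eps(\delta - \eps)\langle \pt u, u\rangle,
\end{equation}
where the source terms and the $\tfrac{\eps}{2}\beta^2$ constant have dropped out because $f=0$ and we subtracted $\tfrac12\beta^2$. The remaining cross term $-\eps(\delta-\eps)\langle \pt u, u\rangle$ is estimated by Young's inequality and absorbed into the good terms $\|\pt u\|^2$ and the coercive part of $\bar\Phi$; here the positivity $\nu>0$ from Lemma \ref{lemma} is essential, since it lets me dominate the $\|u\|_2^2$ contribution of $\langle \pt u, u\rangle$ by $\langle Lu, u\rangle$. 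Choosing $\eps$ small enough (so that $\delta - \eps > 0$ and the cross term is absorbed) produces a differential inequality $\frac{d\bar\Phi}{dt} + \varpi\bar\Phi \leq 0$ for some $\varpi > 0$, and Gronwall's lemma gives $\bar\Phi(t) \leq \bar\Phi(0)e^{-\varpi t}$. Combining with the equivalence $c_1\E \leq \bar\Phi \leq c_2\E$ yields $\E(t) \leq \frac{c_2}{c_1}\E(0)e^{-\varpi t}$, which is the desired conclusion with $c_0 = c_2/c_1$ and $c = \varpi$.

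The main obstacle I anticipate is the coupling between the smallness required of $\eps$ and the coercivity constant $\nu$: the cross term $\eps\langle \pt u, u\rangle$ in the definition of $\bar\Phi$ must not destroy positivity, while simultaneously the dissipation $(\delta-\eps)\|\pt u\|^2$ must dominate the time-derivative cross terms, and both constraints must be compatible with $\varpi > 0$. Since $\nu$ degenerates to zero exactly as $\beta \to -\bar\beta(k)^+$, the admissible range of $\eps$ shrinks near the stability boundary, so care is needed to verify that a valid $\eps$ exists for every $\beta$ strictly greater than $-\bar\beta(k)$. The quartic term $\tfrac{\eps}{2}\|u\|_1^4$ appearing with a favorable sign on the left-hand side is helpful here and should not be discarded, as it provides additional control that strengthens the decay estimate.
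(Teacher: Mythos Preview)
Your proposal is correct and follows essentially the same route as the paper: define $\bar\Phi=\Phi-\tfrac12\beta^2$, use the coercivity from Lemma~\ref{lemma} together with the a priori energy bound to establish the two-sided equivalence $c_1\E\le\bar\Phi\le c_2\E$, specialize identity~\eqref{deri-Phi} to $f=0$ to obtain the clean differential identity you wrote, absorb the cross term $-\eps(\delta-\eps)\langle\pt u,u\rangle$ via Young and the equivalence, and conclude by Gronwall. The only cosmetic differences are that the paper invokes Lemma~\ref{lemma nuovo} (rather than Theorem~\ref{th_ass.set}) for the uniform bound needed in the upper equivalence, and it simply discards the favorable quartic term $\tfrac{\eps}{2}\|u\|_1^4$ rather than using it.
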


\begin{proof}
Let $\bar{\Phi}$ the functional obtained from $\Phi$ by letting
$f=0$ and neglecting the term $\frac{1}{2}\beta ^{2}$, i.e.
\begin{equation} \nonumber
\bar{\Phi} (t)= \left\| u(t)\right\| _{2}^{2}+\left\| \partial_t
u(t)\right\| ^{2}+\beta \left\| u(t)\right\|
_{1}^{2}+\frac{1}{2}\left\| u(t)\right\| _{1}^{4}+k\left\|
u(t)\right\| ^{2}+\varepsilon \left\langle
\partial_t u(t)\,,\,u(t)\right\rangle.
\end{equation}
In view of applying Lemma \ref{lemma}, we remark that
$$
\bar{\Phi} =\left\langle Lu\,,\,u\right\rangle+\left\|
\partial_t u\right\| ^{2}+\frac{1}{2}\left\| u\right\| _{1}^{4}+\varepsilon
\left\langle u,\partial_t u\right\rangle.
$$
The first step is to prove the equivalence between $\mathcal{E}$
and $\bar{\Phi}$, that is
\begin{equation} \nonumber
c_{1}\,\,\,\mathcal{E}\leq \bar{\Phi}\leq c_{2}\,\,\mathcal{E}\ .
\end{equation}
We split the proof into two parts.
 \begin{itemize}
 \item {\it First step}: $c_{1}\,\,\mathcal{E} \leq \bar{\Phi} $\\
 \end{itemize}
By virtue of Lemma \ref{lemma} and (\ref{POINCARE}), the following
chain of inequalities holds provided that $\beta
>-\bar{\beta}(k)$, which ensures the positivity of $\nu$:
$$
\bar{\Phi}\geq  \left( \nu -\frac{\varepsilon }{2\,\lambda _{1}
}\right) \left\| u\right\| _{2}^{2}+\frac{1}{2}\left\| u\right\|
_{1}^{4}+\left( 1-\frac{\varepsilon }{2\,}\right) \left\|
\partial_t u\right\| ^{2} \geq
$$

$$
\geq  \left( \nu -\frac{\varepsilon }{2\,\lambda _{1} }\right)
\left\| u\right\| _{2}^{2}+\left( 1-\frac{\varepsilon
}{2\,}\right) \left\| \partial_t u\right\| ^{2}\geq \min
\left\{\nu -\frac{\varepsilon }{2\,\lambda _{1}
},1-\frac{\varepsilon }{2\,}\right\}\E\,.
$$
If we choose $\varepsilon < \min \left\{ 2 \nu \lambda_1\,,\,2
\right\}$ and we put $c_1=\min \left\{\nu -\frac{\varepsilon
}{2\,\lambda _{1} },1-\frac{\varepsilon }{2\,}\right\}$, it
follows
\begin{equation} \nonumber
c_1\,\,\mathcal{E}\,\leq\,\,\bar{\Phi}.
\end{equation}
\newline
 \begin{itemize}
 \item {\it Second step}: $\bar{\Phi}\leq c_{2}\,\,\mathcal{E}$\\
 \end{itemize}
Using the expression of $\bar{\Phi}$, Young inequality and
(\ref{POINCARE}), we can write the following chain of inequalities
$$
\bar{\Phi} 
%
\leq
%
%
 \left( 1+\frac{k}{ \lambda
_{1}}+\frac{1}{2 \,\lambda _{1}}\right) \left\| u\right\|
_{2}^{2}+\left( 1+\frac{\varepsilon ^{2}}{2}\right) \left\|
\partial_t u\right\| ^{2}+\beta \left\| u\right\|
_{1}^{2}+\frac{1}{2}\left\| u\right\| _{1}^{4}\leq
$$
$$
\leq \left( 2+\frac{k}{ \lambda _{1}}+\frac{1}{2 \,\lambda
_{1}}+\frac{\varepsilon ^{2}}{2}\right) \E+\beta \left\| u\right\|
_{1}^{2}+\frac{1}{2}\left\| u\right\| _{1}^{4} .
$$
In particular, from (\ref{stima2})
 we obtain
$$
\bar{\Phi}\leq \left( 2+\frac{k}{ \lambda _{1}}+\frac{1}{2
\,\lambda _{1}}+\frac{\varepsilon ^{2}}{2}\right)
\E+\,\,\bar{C}\,\left\| u\right\| _{1}^{2}\leq
$$

$$
\leq \left( 2+\frac{k}{ \lambda _{1}}+\frac{1}{2 \,\lambda
_{1}}+\frac{\varepsilon ^{2}}{2}+\frac{\bar{C}}{ \sqrt{\,\lambda
_{1}}} \,\right) \E  = c_2\,\,\mathcal{E}.
$$
The last step is to prove the exponential decay of $\bar{\Phi}$.
To this aim, remembering the hypothesis $f=0$, we can write the
identity (\ref{deri-Phi}) in the following way:
\begin{equation} \nonumber
\frac{d\bar{\Phi}}{dt}+\varepsilon \bar{\Phi}+2\left( \delta
-\varepsilon \right) \left\| \partial_t u\right\|
^{2}+\frac{\varepsilon }{2}\left\| u\right\| _{1}^{4}+\varepsilon
\left( \delta -\varepsilon \right) \left\langle \partial_t
u\,,\,u\right\rangle =0.
\end{equation}
Exploiting the Young inequality we obtain
\begin{equation} \nonumber
\frac{d\bar{\Phi}}{dt}+\varepsilon \bar{\Phi}+ \frac{3}{2} \left(
\delta -\varepsilon \right) \left\| \partial_t u\right\| ^{2}\leq
\frac{\varepsilon^2\left( \delta -\varepsilon
\right)}{2\lambda_{1} }\,\left\| u\right\| ^{2}_{2} \
\end{equation}
and choosing $\varepsilon<\delta$ it follows
\begin{equation} \nonumber
\frac{d\bar{\Phi}}{dt}+\varepsilon \bar{\Phi}\leq
\frac{\varepsilon^2\,\left( \delta -\varepsilon
\right)}{2\lambda_{1} } \,\E\leq \frac{\varepsilon^2\left( \delta
-\varepsilon \right)}{2\,c_{1}\lambda_{1} }\,\bar{\Phi}
\end{equation}
and finally, if $\varepsilon$ is small enough, we have
\begin{equation} \label{stima-exp1}
\frac{d\bar{\Phi}}{dt}+ c\,\bar{\Phi}\leq 0\ ,
\end{equation}
where $c = \varepsilon \left[ 1-\frac{\varepsilon \left( \delta
-\varepsilon \right) }{2\,c_{1}\lambda_{1} }\right]>0$. Equation
(\ref{stima-exp1}) implies
\\
$$c_{1}\E(t)\leq\bar{\Phi}\left( t\right) \leq \bar{\Phi}\left( 0\right)
\,e^{- \,c\,t}\leq c_{2}\,\E\left( 0\right)\,e^{- \,c\,t}.$$ The
thesis follows by letting $c_{0}=\frac{c_{2}}{c_{1}}$\ .
\end{proof}

\begin{remark}
We stress that Theorem \ref{exp-stab} holds even if $k=0$. In this
case however we have $\bar{\beta}(0)= - {\beta_{c}}(0)= -
\sqrt{\lambda_{1}}$. Then the null solution is exponentially
stable, if unique.
\end{remark}

In Fig.\ref{fig}, the piecewise-straight line is the bifurcation
line $\beta=-\beta_{c}(k)$: below it, the system has multiple
stationary solutions, above it, there exists only the null
solution. The curve $\beta=-\bar{\beta}(k)$ is composed of the
straight segment $\beta=\mu_1(k)$, when $0<k\leq\lambda_1$, and
the parabola $\beta=-2\sqrt{k}$, if $k>\lambda_1$.
 It is worth noting that each segment composing the graphic of $-\beta_c$ is tangent to the parabola at $k= \lambda_n=n^4\pi^4$, $n\in\N$.
 In the region of the plain $(\beta,k)$ which is bounded from below by $\beta>-\bar{\beta}(k)$ the exponential stability holds true. Whilst,
in the area between the red and the blue line, when $k>\lambda_1$,
we have asymptotic, but not exponential stability.

\begin{figure}[h]
 \psfrag{O}{\tiny{$O$}}
 \psfrag{k}{\tiny{$k$}}
 \psfrag{b}{\tiny{$\beta$}}
\psfrag{m}{\tiny{$-\sqrt{\lambda_1}$}}
 \psfrag{n}{\tiny{$-2\sqrt{\lambda_1}$}}
 \psfrag{l}{\tiny{$\lambda_1$}}
 \psfrag{p}{\tiny{$4\lambda_1$}}
 \psfrag{q}{\tiny{$\lambda_2$}}
 \psfrag{r}{\tiny{$-5\sqrt{\lambda_1}$}}
 \psfrag{s}{\tiny{$-2\sqrt{\lambda_2}$}}
\begin{center}
\centerline{\includegraphics[width=13.3cm]{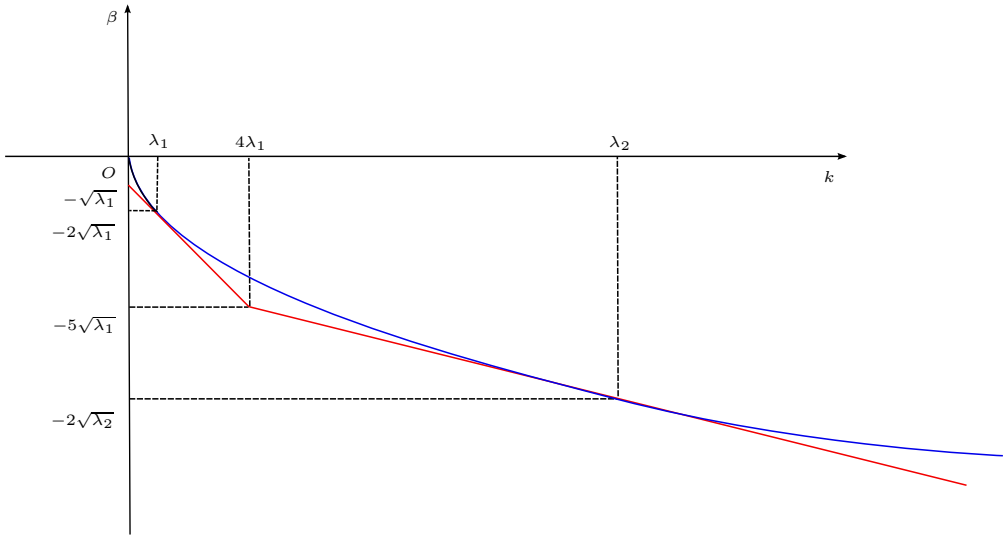}}

\caption{ A picture of the functions $-\bar{\beta}(k)$ and
$-\beta_c(k)$.} \label{fig}
\end{center}
\end{figure}


\section{The Global Attractor}

\noindent We now state the existence of a global attractor for
$S(t)$, for any $\beta \in \mathbb{R}$ and $k\geq0$. Recall that
the global attractor $\A$ is the unique compact subset of $\H_0$
which is at the same time
\begin{itemize}
\item[(i)] attracting:
$$\lim_{t\to\infty}\boldsymbol{\delta}(S(t)\B,\A)\to 0,$$
for every bounded set $\B\subset\H_0$; \item[(ii)] fully
invariant:
$$S(t)\A=\A,$$
for every $t\geq 0$.
\end{itemize}
Here, $\boldsymbol{\delta}$ stands for the Hausdorff semidistance
in $\H_0$, defined as (for $\B_1,\B_2\subset\H_0$)
$$\boldsymbol{\delta}(\B_1,\B_2)=\sup_{z_1\in \B_1}\inf_{z_2\in \B_2}\|z_1-z_2\|_{\H_0}.$$
We address the reader to the books \cite{HAL,TEM} for a detailed
presentation of the theory of attractors. We shall prove the
following

\begin{theorem}
\label{MAIN} The semigroup $S(t)$ on $\H_0$ possesses a connected
global attractor $\A$ bounded in $\H_2=D(A) \times
D(A^{\frac{1}{2}})$. Moreover, $\A$ coincides with the unstable
manifold of the set ${\mathcal S}$ of the stationary states of
$S(t)$, namely,
$$\A=
\left\{
\widetilde{z}\,\,(0):
\begin{array}{cc}
      &  \widetilde{z}\,\,\hbox{is a complete (bounded) trajectory of }S(t):
       \\
 & \underset{t\rightarrow
\infty }{\lim }\left\| \widetilde{z}(-t)-\,S\right\|
_{\H_{0}}=\,0
\end{array}
\right\} .
$$
\end{theorem}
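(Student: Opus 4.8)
The plan is to deduce Theorem \ref{MAIN} from the standard characterization theorem for gradient systems (see \cite{HAL,TEM}): a dynamical system on $\H_0$ that admits a bounded absorbing set, is asymptotically compact, and possesses a strict Lyapunov functional, has a connected global attractor coinciding with the unstable manifold of its set $\mathcal{S}$ of equilibria, provided $\mathcal{S}$ is bounded. The absorbing set is already supplied by Theorem \ref{th_ass.set} for every $\beta\in\R$, so the work reduces to establishing the gradient structure and, above all, the asymptotic smoothing property; the regularity and connectedness assertions then follow with comparatively little effort.

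For the gradient structure I would take as Lyapunov functional the quantity $\F(t)=\L(t)-2\langle f,u(t)\rangle$ introduced in the proof of Lemma \ref{lemma nuovo}. The energy identity there gives $\tfrac{d}{dt}\F=-2\delta\|\pt u\|^2\le 0$, so $\F$ is non-increasing along trajectories, and it is bounded below since the same proof yields $\F\ge(1-\varepsilon C_1)\E-\tfrac{1}{\varepsilon}\|f\|^2$. Moreover $\F$ is constant on a complete trajectory only if $\pt u\equiv 0$, in which case $u$ is a solution of the stationary problem \eqref{STATIC}; hence $\F$ is a strict Lyapunov functional and $(S(t),\H_0)$ is a gradient system. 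The equilibrium set $\mathcal{S}$ is bounded because each stationary state lies in the absorbing set. Consequently, once asymptotic compactness is in hand, the abstract theorem already delivers $\A$ together with its identification as the unstable manifold of $\mathcal{S}$.

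The core of the argument, and the main obstacle, is the asymptotic compactness of $S(t)$, which I would obtain through the additive decomposition of \cite{GPV}. I would split the solution as $u=v+w$, where the \emph{decaying component} $(v,\pt v)$ solves the purely linear dissipative problem $\ptt v+Av+\delta\pt v=0$ carrying the full initial data $z_0$, while the \emph{regular component} $(w,\pt w)$, with null data, solves $\ptt w+Aw+\delta\pt w=-(\beta+\|u\|_1^2)A^{1/2}u-ku+f$. Since the decay of $(v,\pt v)$ is governed only by $A>0$ and $\delta>0$, the multiplier scheme of Sections 3 and 4 gives $\|(v,\pt v)(t)\|_{\H_0}\le Q(R)\,e^{-\kappa t}$ uniformly for $\|z_0\|_{\H_0}\le R$ and for \emph{every} $\beta$. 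The delicate point is the smoothing estimate for $(w,\pt w)$, and here the genuine difficulty appears: on the absorbing set the geometric nonlinearity $(\beta+\|u\|_1^2)A^{1/2}u$ is controlled only in $H_0$, because the absorbing set bounds $\|u\|_2$ but not $\|u\|_3$, so a naive gain of regularity is unavailable. The decomposition is precisely what resolves this. Testing the $w$-equation at a higher regularity level and writing $A^{1/2}u=A^{1/2}v+A^{1/2}w$, the $w$-contribution produces a coercive term that is absorbed into the higher-order energy, while the $v$-contribution is handled by Young's inequality exploiting the exponential decay of $\|v\|_2$, which makes it integrable in time; combined with the uniform bound \eqref{stima2} on $\|u\|_1$, a Gronwall argument then bounds $(w,\pt w)(t)$ in $\H_1$ uniformly in $t$. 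Because $\H_1\Subset\H_0$ by \eqref{POINCARE} and the compact embeddings $H_{\ell+1}\Subset H_\ell$, the regular component lies in a relatively compact subset of $\H_0$, which yields asymptotic compactness.

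Finally, the additional regularity is obtained by iterating this smoothing estimate: a bootstrap of the higher-order argument raises the uniform bound on the regular component to $\H_2=D(A)\times D(A^{1/2})$, and since $\A$, being fully invariant and attracting, is contained in the closure of the range of the regular component, it follows that $\A$ is bounded in $\H_2$. Connectedness of $\A$ follows from the gradient structure together with the connectedness of the phase space $\H_0$ and the continuity of $S(t)$: as the absorbing ball $B(0,\bar R)$ is connected and $\A$ is the $\omega$-limit set of connected sets, it is itself connected. The principal technical hurdle throughout remains the higher-order energy estimate for the regular component in the presence of the nonlocal geometric nonlinearity, where it is exactly the uniform control \eqref{stima2} of $\|u\|_1$ and the exponential decay of $v$ that render the forcing tractable and allow the Gronwall estimate to close.
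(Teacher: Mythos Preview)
Your overall plan (gradient structure via $\F$, decomposition into a decaying plus a compact part, then invoke an abstract lemma) is exactly the architecture the paper follows. The gap is in the specific decomposition you choose, and it is a real one.

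You let $v$ solve the \emph{purely linear} problem $\ptt v+Av+\delta\pt v=0$ and push the entire geometric term $(\beta+\|u\|_1^2)A^{1/2}u$ into the $w$--equation as a source. After writing $A^{1/2}u=A^{1/2}v+A^{1/2}w$ and reabsorbing the $w$--piece, the residual source driving $w$ still contains $(\beta+\|u\|_1^2)A^{1/2}v$. When you test at the $\H_1$ level (multiplying by $A^{1/2}\pt w$), this term produces
\[
\big|\langle A^{1/2}v,\,A^{1/2}\pt w\rangle\big|\;\le\;\|v\|_3\,\|\pt w\|_1
\quad\text{or}\quad \|v\|_2\,\|\pt w\|_2,
\]
and neither bound closes: the linear hyperbolic flow gives no smoothing, so you only control $\|v\|_2$ (not $\|v\|_3$), while $\|\pt w\|_2$ is one derivative above what your energy sees. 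Integrating by parts in time trades this for $\langle A^{1/2}\pt v, A^{1/2}w\rangle$, which fails for the same reason ($\pt v\in H_0$ only). Thus ``Young's inequality exploiting the exponential decay of $\|v\|_2$'' does not suffice, and the Gronwall argument cannot be closed; the bootstrap to $\H_2$ then has no base case.

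The paper avoids precisely this obstruction by a different split (borrowed from \cite{GPV}): it \emph{keeps} the operator $(\beta+\|u\|_1^2)A^{1/2}$ acting on $v$ as well, and compensates its possible anti--coercivity (for $\beta$ very negative) by adding an artificial lower--order dissipation $\mu v$ with $\mu$ large, cf.\ \eqref{gamma}--\eqref{DECAY}. The price is that the decay of $v$ is no longer elementary: the time--dependent coefficient $\|u(t)\|_1^2$ forces one through the dissipation integral of Lemma \ref{lemmaINT} and the Gronwall--type Lemma \ref{GRNW}. The payoff is that the only transferred source in the $w$--equation \eqref{CPT} is $-\mu v+f$, a \emph{zeroth--order} term bounded in $H_2$, so the $\H_2$ energy estimate for $w$ closes in one step (Lemma \ref{lemmaCPT}) with no bootstrap at all. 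In short: you traded an easy decay for an impossible smoothing; the paper trades a harder decay for a clean smoothing.
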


\noindent The set ${\mathcal S}$ consists of all the vectors of the
form $(\bar u,0)$, where $\bar u$ is a (weak) solution to
$$A\bar u+\big(\beta+\|\bar u\|^2_1\big)A^{1/2}\bar u\,+ k\bar u= f.$$
It is then apparent that ${\mathcal S}$ is bounded in $\H_0$. If
${\mathcal S}$ is finite, then
\begin{equation}  \label{A_set}
\A= \left\{
\begin{array}{ccc}
      &  \widetilde{z}\,\,\hbox{is a complete (bounded) trajectory of } S(t)
       \\
  \widetilde{z}\,\,(0):    &  \hbox{such that }\exists\,z_{1}, z_{2}\in S\,:
   \\
     &
      \underset{t\rightarrow \infty }{\lim
}\left\| \widetilde{z}(-t)-\,z_{1}\right\|_{\H_{0}}=\underset{t\rightarrow \infty }{\lim }\left\|\widetilde{z}(t)-\,z_{2}\right\| _{\H_{0}}=\,0
\end{array}
\right\} .
\end{equation}
\begin{remark}
When $f=0$, $k\geq 0$ and $\beta \geq -\beta_c(k)$, then $\A=\mathcal{S}=\{(0,0)\}$. If $\beta < -\beta_c(k)$,
then $\mathcal{S}=\mathcal{S}_0$ may be finite or infinite, according to Theorem \ref{TH-stat.solut}. In the former case, \eqref{A_set} applies.
\end{remark}

The existence of a Lyapunov functional, along with the fact that
$\mathcal{S}$ is a bounded set, allows us prove Theorem \ref{MAIN}
by paralleling some arguments devised in \cite{GPV}.

By the interpolation inequality $\|u\|_1^2\leq \|u\|\|u\|_2$ and
\eqref{POINCARE}, it is clear that
\begin{equation}
\label{gamma} \frac12\|u\|_2^2\leq
\|u\|_2^2+\beta\|u\|_1^2+\gamma\|u\|^2\leq m\|u\|_2^2,
\end{equation}
for some $m=m(\beta,\gamma)\geq 1$,  provided that $\gamma>0$ is large enough.
Now, choosing $\gamma= \mu +k$, where $k>0$ is a fixed value, we assume $\mu$ large enough
so that \eqref{gamma} holds true. Then, we decompose the solution $S(t)z$
into the sum (see \cite{GPV})
$$S(t)z=L(t)z+K(t)z,$$
where
$$L(t)z=(v(t),\pt v(t))\qquad\text{and}\qquad
K(t)z=(w(t),\pt w(t))$$ solve the systems
\begin{equation}
\label{DECAY}
\begin{cases}
\ptt v+  Av+
(\beta+\|u\|^2_1)A^{1/2}v+\mu v + \delta \pt v + k v= 0,\\
\noalign{\vskip1.5mm} (v(0),\pt v(0))=z,
\end{cases}
\end{equation}
and
\begin{equation}
\label{CPT}
\begin{cases}
\ptt w+  Aw+
(\beta+\|u\|^2_1)A^{1/2}w-\mu v + \delta \pt w + k w= f,\\
\noalign{\vskip1.5mm} (w(0),\pt w(0))=0.
\end{cases}
\end{equation}
Having fixed the boundary set of initial data $B(0,R)$, Lemma
\ref{lemma nuovo} entails
$$\underset{t\geq 0}{\sup}\left\{ \left\|
u(t)\right\| _{2}^{2}+\left\| \partial _{t}u(t)\right\|
^{2}\right\} \leq C=C(R)\,,\,\,\,\,\,\,\forall \,z_0\,\in
\,B(0,R)\,,$$
where $\left( u(t)\,,\,\partial _{t}u(t)\right) =\;S(t)z_{0}$;
this formula will be used many times in the subsequent proofs.

In order to parallel the procedure given in \cite {GPV}, we shall
prove Theorem \ref{MAIN} as a consequence of the following five
lemmas. We start showing the exponential decay of $L(t)z$ (see
Lemmas \ref{lemmaDECAY} and \ref{GRNW}) by means of a dissipation
integral (see Lemma \ref{lemmaINT}). Then, we prove the asymptotic
smoothing property of ${K}(t)$, for initial data bounded by $R$
(see Lemma \ref{lemmaCPT}). Finally, by collecting all these
results, we provide the desired consequence (see Lemma
\ref{lemmaABSTRACT}).

Henceforth, let $R>0$ be fixed and $\|z\|_{\H_0}\leq R$.  In addition, $C$ will denote a {\it generic}
positive constant which depends (increasingly) only on $R$, unless
otherwise specified, besides on the structural quantities of the
system.

\begin{lemma}
\label{lemmaDECAY} There is $\omega=\omega(R)>0$ such that
$$\|L(t)z\|_{\H_0}\leq Ce^{-\omega t}.$$
\end{lemma}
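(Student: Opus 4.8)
The plan is to prove exponential decay of the semigroup $L(t)z=(v(t),\pt v(t))$ solving the linear (time-dependent) system \eqref{DECAY} by constructing an energy functional for the $v$-equation that is equivalent to $\|L(t)z\|_{\H_0}^2$ and satisfies a Gronwall-type differential inequality. The key structural advantage of \eqref{DECAY} over \eqref{BASE} is the added term $\mu v$: by the choice $\gamma=\mu+k$ and the equivalence \eqref{gamma}, the spatial operator governing $v$ is coercive, i.e. the quadratic form $\|v\|_2^2+\beta\|v\|_1^2+\gamma\|v\|^2$ controls $\tfrac12\|v\|_2^2$ from below and $m\|v\|_2^2$ from above. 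This is exactly what was missing in the nonlinear problem when $\beta<-\bar\beta(k)$, and it makes the decay available \emph{for all} $\beta\in\R$.

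\emph{First step: the energy functional.} I would introduce the functional
$$
\Psi(t)=\|\pt v(t)\|^2+\|v(t)\|_2^2+\big(\beta+\|u(t)\|_1^2\big)\|v(t)\|_1^2+\gamma\|v(t)\|^2+\eps\l\pt v(t),v(t)\r,
$$
mimicking the definition of $\bar\Phi$ in Theorem \ref{exp-stab}. Using \eqref{gamma} together with the uniform bound $\sup_{t\ge0}\|u(t)\|_2^2\le C(R)$ established just above (which controls the time-dependent coefficient $\beta+\|u\|_1^2$ uniformly via \eqref{stima2}), and the elementary estimate \eqref{aggiunta} for the cross term $\l\pt v,v\r$, I would show that for $\eps$ small enough there are constants $c_1,c_2>0$, depending only on $R$, with
$$
c_1\,\|L(t)z\|_{\H_0}^2\le \Psi(t)\le c_2\,\|L(t)z\|_{\H_0}^2.
$$
The lower bound exploits the coercivity furnished by $\gamma$, so no restriction on $\beta$ is needed, unlike in Lemma \ref{lemma}.

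\emph{Second step: the differential inequality.} Differentiating $\Psi$ along \eqref{DECAY} and using $\partial_{tt}v=-Av-(\beta+\|u\|_1^2)A^{1/2}v-\mu v-\delta\pt v-kv$, the damping term $-\delta\|\pt v\|^2$ produces dissipation, while the $\eps\l\pt v,v\r$ term contributes the crucial negative multiple of the coercive spatial form $-\eps\big(\|v\|_2^2+(\beta+\|u\|_1^2)\|v\|_1^2+\gamma\|v\|^2\big)$. Absorbing the indefinite cross terms by Young's inequality and choosing $\eps$ sufficiently small (so that $\eps<\delta$ and $\eps$ stays below the threshold dictated by $c_1$, exactly as in the passage leading to \eqref{stima-exp1}), I would obtain
$$
\frac{d\Psi}{dt}+\omega\,\Psi\le 0
$$
for some $\omega=\omega(R)>0$. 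Gronwall's lemma then yields $\Psi(t)\le\Psi(0)e^{-\omega t}$, and combining with the equivalence gives $\|L(t)z\|_{\H_0}^2\le (c_2/c_1)\,\|z\|_{\H_0}^2\,e^{-\omega t}\le Ce^{-\omega t}$, since $\|z\|_{\H_0}\le R$.

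\emph{Main obstacle.} The delicate point is that \eqref{DECAY} is \emph{not autonomous}: the coefficient $\beta+\|u(t)\|_1^2$ depends on the full nonlinear solution $u(t)=v(t)+w(t)$, and in principle its sign and size vary in time. The resolution is that this coefficient is uniformly bounded on $(0,\infty)$ by the absorbing-set estimate $\sup_{t\ge0}\|u(t)\|_2^2\le C(R)$, so all constants in both the equivalence and the differential inequality can be taken uniform in $t$ and dependent only on $R$; this is precisely why $\omega$ is allowed to depend on $R$. A secondary care is that when $\beta+\|u\|_1^2$ is negative the term $(\beta+\|u\|_1^2)\|v\|_1^2$ is not sign-definite, but the coercivity built into $\gamma$ through \eqref{gamma} dominates it, so $\Psi$ remains comparable to the energy regardless of sign.
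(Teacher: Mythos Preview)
Your functional $\Psi$ coincides with the paper's $\Phi_0$, and your equivalence step is correct. The gap is in the second step. When you differentiate $\Psi$, the non-autonomous coefficient $\|u(t)\|_1^2$ produces an extra term that you have not accounted for:
\[
\frac{d}{dt}\Big[(\beta+\|u\|_1^2)\|v\|_1^2\Big]
= 2\l A^{1/2}u,\pt u\r\,\|v\|_1^2 + (\beta+\|u\|_1^2)\cdot 2\l A^{1/2}v,\pt v\r.
\]
The second piece cancels nicely against the corresponding term coming from $\l\ptt v,\pt v\r$, but the first piece remains and yields exactly the identity the paper obtains,
\[
\frac{d\Psi}{dt}+\eps\Psi+2(\delta-\eps)\|\pt v\|^2
= 2\l A^{1/2}u,\pt u\r\,\|v\|_1^2-\eps(\delta-\eps)\l\pt v,v\r.
\]
The right-hand side is controlled only by $C\|\pt u\|\,\E_0$, not by a small multiple of $\E_0$. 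Your ``Main obstacle'' paragraph addresses the \emph{size} of the coefficient $\beta+\|u\|_1^2$ (which is indeed bounded and suffices for the equivalence), but not the size of its \emph{time derivative} $\partial_t(\|u\|_1^2)=2\l A^{1/2}u,\pt u\r$. Since $\|\pt u(t)\|$ is only known to be bounded by $C(R)$, not small, and since $\eps$ must be taken small for the equivalence, you cannot absorb $C\|\pt u\|\Psi$ into $\eps\Psi$; the claimed inequality $\frac{d\Psi}{dt}+\omega\Psi\le0$ does not follow.

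The paper closes this gap with two additional ingredients that your sketch omits: a dissipation integral (Lemma~\ref{lemmaINT}) showing that for every small $\sigma>0$
\[
\int_\tau^t\|\pt u(y)\|\,dy\le \sigma(t-\tau)+\frac{C}{\sigma},
\]
and a tailored Gronwall-type lemma (Lemma~\ref{GRNW}) which converts the inequality $\Psi'+\tfrac{\eps}{2}\Psi\le C\|\pt u\|\Psi$ into exponential decay once the forcing $g=C\|\pt u\|$ satisfies such an averaged bound. Without these, the argument as written is incomplete.
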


\begin{proof}
Denoting
$$\E_0(t)=\|L(t)z\|_{\H_0}^2= \|v(t)\|^2_2+\|\pt v(t)\|^2,$$
for $\eps>0$ to be determined, we set
$$
\Phi_0(t)=\E_0(t)+\beta\|v(t)\|_1^2+ (\mu + k) \|v(t)\|^2
+\|u(t)\|^2_1\|v(t)\|^2_1+\eps\l\pt v(t),v(t)\r.
$$
\newline
In light of Lemma \ref{lemma nuovo} and inequalities (\ref{gamma}), assuming that $\eps$ is small enough, we have the bounds
\begin{equation}
\label{stima-3} \frac14\E_0\leq \Phi_0\leq C\E_0.
\end{equation}
Now, we compute the time-derivative of $\Phi_0$ along the solutions to system (\ref{DECAY}) and we obtain
$$
\frac{d \Phi_0}{dt}+\eps\Phi_0+2(\delta - \eps)\|\pt v\|^2 =2\l\pt
u,A^{1/2}u\r\|v\|^2_1 -\eps(\delta-\eps)\l\pt v,v\r.
$$
Using (\ref{energy}) and assuming $\eps$ small enough (in particular, $\eps<\delta$), we control the rhs
by
$$
C\|\pt u\|\E_0 +\frac{\eps}{8}\E_0 + (\delta - \varepsilon)\|\pt
v(t)\|^2.
$$
So, from (\ref{stima-3}), we obtain
$$
\frac{d \Phi_0}{dt}+\frac{\eps}{2}\Phi_0 +(\delta-\eps)\|\pt v\|^2
\leq C\|\pt u\|\Phi_0.
$$
Finally,
the functional $\Phi_0$
fulfills the differential inequality
\begin{equation} \label{(5.9)}
\frac{d\Phi _{0}}{dt}+\frac{\varepsilon }{2}\Phi_{0}\leq C\left\|
\partial _{t}u\right\| \Phi_{0} .
\end{equation}
The desired conclusion is entailed by applying the following two
lemmas.

\begin{lemma}
\label{GRNW} \rm{(see Lemma 6.2, \cite{GPV})} Let
$\varphi:[0,\infty)\to[0,\infty)$ satisfy
$$
\varphi'+2\eps \varphi\leq g\varphi,
$$
for some $\eps>0$ and some positive function $g$ such that
$$\int_\tau^t g(y)dy\le c_0+\eps (t-\tau),\qquad \forall \tau\in[0,t],$$
with $c_0\geq 0$. Then, there exists $c_1\geq 0$ such that
$$
\varphi(t)\leq c_1\varphi(0)e^{-\eps t}.
$$
\end{lemma}

\begin{lemma} \label{lemmaINT}
For any $\sigma > 0$ small
$$
\int\limits_{\tau }^{t}\left\|
\partial_t u\left( y\right) \right\| dy\leq \sigma (t-\tau
)+\frac{C_{2}}{\sigma }\,\,,
$$
for every $t\geq\tau\geq0.$
\end{lemma}

\begin{proof}
For $\eps \in (0,1]$, we set
$$
\Psi =
\Phi-2 \left\langle f\,,\,u\right\rangle-\frac{1}{2}\beta^2,
$$
where $\Phi$ is defined by \eqref{PHI_FUNZIONALE}.
Taking the time derivative of $\Psi$ and using \eqref{deri-Phi},
we find
\begin{equation} \label{PsiDerivata}
\frac{d\Psi }{dt}+\varepsilon \Psi +\frac{\varepsilon }{2}\left\|
u\right\| _{1}^{4} + 2(\delta - \varepsilon) \|\partial
_{t}u\|^{2} = - \varepsilon \left[ \left\langle
f\,,\,u\right\rangle + (\delta - \varepsilon) \left\langle
\partial _{t}u\,,\,u\right\rangle \right]  .
\end{equation}
By virtue of Lemma \ref{lemma nuovo}, $\E$ is bounded and hence
$$\frac{d\Psi }{dt}+\varepsilon \Psi+ (\delta-\varepsilon)\left\| \partial _{t}u\right\| ^{2} \leq \varepsilon\,C \ .$$
Since $\Psi$
is uniformely bounded by Lemma \ref{lemma_34}, we end up with
$$\frac{d\Psi }{dt}+(\delta-\varepsilon)\left\| \partial _{t}u\right\| ^{2}
\leq \varepsilon \left[ C-\Psi \right] \leq \varepsilon\,C . $$
Integrating this inequality on $[\tau, t]$, we get
$$
(\delta-\varepsilon)\int\limits_{\tau }^{t}\left\| \partial_t
u\left( y\right) \right\|^2 dy\leq \varepsilon\,C(t-\tau )+\Psi
\left( \tau \right) -\Psi \left( t\right) .
$$
Assuming $\varepsilon<\frac{\delta}{2}$, a further application of
Lemma \ref{lemma_34} entails
$$
\int\limits_{\tau }^{t}\left\| \partial_t u\left( y\right)
\right\|^2 dy\leq \varepsilon \,C_{1}(t-\tau )+C_{2} .
$$
Finally, by the H\"{o}lder and the Young's inequalities
$$
\int\limits_{\tau }^{t}\left\|
\partial_t u\left( y\right)
\right\| dy
%
\leq \sqrt{t-\tau } \Big( \varepsilon \,C_{1}(t-\tau )+C_{2}\Big)
^{\frac{1}{2}}\leq
$$
$$
\leq \sqrt{t-\tau }\left( \sqrt{\varepsilon \,C_{1}}\sqrt{t-\tau
}+\sqrt{C_{2}}\right) \leq \sqrt{\varepsilon \,C_{1}}(t-\tau
)+\sqrt{C_{2}}\sqrt{t-\tau } \leq
$$
$$
\leq 2\sqrt{\varepsilon \,C_{1}}(t-\tau
)+\frac{C_{2}}{2\sqrt{\varepsilon \,C_{1}}}=\sigma (t-\tau
)+\frac{C_{2}}{\sigma }
$$
where $\sigma\,=\,2\sqrt{\varepsilon \,C_{1}}$.
\end{proof}
\end{proof}

The next result provides the boundedness of $K(t)z$ in a more
regular space.

\begin{lemma}
\label{lemmaCPT} The estimate
$$\|K(t)z\|_{\H_2}\leq C$$
holds for every $t\geq 0$.
\end{lemma}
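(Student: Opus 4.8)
The plan is to establish higher-order regularity for the compact component $K(t)z=(w,\pt w)$ by performing an energy estimate one level up in the Hilbert scale, working with $\|w\|_3^2+\|\pt w\|_1^2$ rather than $\|w\|_2^2+\|\pt w\|^2$. Concretely, I would introduce the auxiliary functional
\begin{equation}\nonumber
\Phi_1(t)=\|w(t)\|_3^2+\|\pt w(t)\|_1^2+\beta\|w(t)\|_2^2+(\mu+k)\|w(t)\|_1^2+\|u(t)\|_1^2\|w(t)\|_2^2+\eps\l\pt w(t),w(t)\r_1,
\end{equation}
which is the $H_1$-level analogue of the $\Phi_0$ used in Lemma \ref{lemmaDECAY}. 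Using the interpolation and Poincar\'e inequalities \eqref{POINCARE} exactly as in \eqref{gamma}, for $\gamma=\mu+k$ large enough and $\eps$ small enough this functional is equivalent to $\|w\|_3^2+\|\pt w\|_1^2=\|K(t)z\|_{\H_2}^2$, i.e. $\tfrac14(\|w\|_3^2+\|\pt w\|_1^2)\le\Phi_1\le C(\|w\|_3^2+\|\pt w\|_1^2)$.

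Next I would differentiate $\Phi_1$ along the trajectories of system \eqref{CPT}. The linear dissipative terms reproduce, one scale higher, the structure seen in Lemma \ref{lemmaDECAY}, yielding a leading inequality of the form $\frac{d\Phi_1}{dt}+\frac{\eps}{2}\Phi_1\le(\text{forcing and coupling terms})$. The forcing contributions are controlled as follows: the source $f\in H_0$ pairs against $A^{1/2}\pt w$ and is bounded using $\|f\|$ and the uniform bound on $\|K(t)z\|_{\H_2}$ already being propagated; the coupling term $-\mu v$ is handled by invoking the exponential decay of $L(t)z$ from Lemma \ref{lemmaDECAY}, so that $\mu\|v\|$ contributes a term proportional to $Ce^{-\omega t}$; and the nonlinear term $(\beta+\|u\|_1^2)A^{1/2}w$, after differentiation, produces a factor $\|\pt u\|$ times $\Phi_1$ together with lower-order pieces, precisely the structure to which Lemma \ref{lemmaINT} applies. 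Collecting everything, I would arrive at a differential inequality of the shape
\begin{equation}\nonumber
\frac{d\Phi_1}{dt}+\frac{\eps}{2}\Phi_1\le C\|\pt u\|\,\Phi_1+Ce^{-\omega t}+C,
\end{equation}
where the constants depend only on $R$ through Lemma \ref{lemma nuovo} and the uniform bound $\sup_{t\ge0}(\|u\|_2^2+\|\pt u\|^2)\le C(R)$.

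The conclusion then follows by the same Gronwall-type argument underlying Lemma \ref{GRNW}: since $\int_\tau^t\|\pt u(y)\|\,dy\le\sigma(t-\tau)+C_2/\sigma$ by Lemma \ref{lemmaINT}, choosing $\sigma$ small makes the coefficient $C\|\pt u\|$ integrable with arbitrarily small linear growth, so the multiplicative term cannot overcome the damping $\tfrac{\eps}{2}\Phi_1$, and the constant and exponentially decaying inhomogeneities keep $\Phi_1$ uniformly bounded. Equivalence of $\Phi_1$ with $\|K(t)z\|_{\H_2}^2$ then gives $\|K(t)z\|_{\H_2}\le C$ for all $t\ge0$, as claimed. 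I expect the main obstacle to be the nonlinear coupling term: after raising the regularity by one derivative, the quasilinear factor $(\beta+\|u\|_1^2)A^{1/2}w$ and its time derivative must be estimated so that the genuinely dangerous part is exactly of the form $\|\pt u\|\,\Phi_1$ (absorbable via Lemmas \ref{lemmaINT}--\ref{GRNW}) while all remaining pieces are either lower order, controlled by the uniform $\H_0$-bound, or absorbed into the damping; keeping these contributions at the $H_1$ level without losing a derivative on $u$ is the delicate point, and it is here that the special structure of the hinged boundary conditions, encoded in \eqref{DELTA} and the commuting fractional powers of $A$, is essential.
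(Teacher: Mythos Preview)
Your proposal contains a norm-identification error that makes the argument prove the wrong statement. In this paper $\H_\ell = H_{\ell+2}\times H_\ell$, so $\H_2 = H_4\times H_2$ and $\|K(t)z\|_{\H_2}^2 = \|w\|_4^2 + \|\pt w\|_2^2$, \emph{not} $\|w\|_3^2+\|\pt w\|_1^2$ as you write. Your functional $\Phi_1$ is built one step too low in the scale; what it would actually yield is a uniform bound in $\H_1 = H_3\times H_1$, strictly weaker than the lemma and insufficient for the compactness step (Theorem~\ref{MAIN} asserts the attractor is bounded in $\H_2$). The correct test function is $A\pt w$, and the paper's functional is
\[
\Phi_1=\E_1+(\beta+\|u\|_1^2)\|w\|_3^2+\eps\l\pt w,Aw\r-2\l f,Aw\r+k\|w\|_2^2,\qquad \E_1=\|w\|_4^2+\|\pt w\|_2^2.
\]
Note also that the static source is handled by incorporating $-2\l f,Aw\r$ into the functional; your $\Phi_1$ omits this, and the direct pairing $\l f,A^{1/2}\pt w\r$ you describe is not controlled by $\|f\|$ alone at the regularity level you are working at.

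Beyond the level mismatch, your route is more elaborate than necessary. You anticipate a term $C\|\pt u\|\,\Phi_1$ and plan to absorb it via Lemmas~\ref{lemmaINT}--\ref{GRNW}, mirroring the proof of Lemma~\ref{lemmaDECAY}. The paper avoids this entirely. The nonlinear contribution to $\frac{d\Phi_1}{dt}$ is $2\l\pt u,A^{1/2}u\r\|w\|_3^2$, and since $\|w\|_2\le C$ is \emph{already known} (from \eqref{energy} and Lemma~\ref{lemmaDECAY}), interpolation gives $\|w\|_3^2\le\|w\|_2\|w\|_4\le C\sqrt{\E_1}$. Together with $\|\pt u\|\,\|u\|_2\le C$, this term is of order $C\sqrt{\E_1}$, not $C\|\pt u\|\,\E_1$. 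The coupling $-\mu v$ contributes $2\mu\l A^{1/2}v,A^{1/2}\pt w\r\le C\|\pt w\|_2\le C\sqrt{\E_1}$ using only the $\H_0$ bound on $v$ (the exponential decay is not needed here). The whole right-hand side is thus dominated by $\tfrac{\eps}{8}\E_1+C\sqrt{\E_1}+C\le\tfrac{\eps}{2}\Phi_1+C$, giving the clean inequality $\frac{d\Phi_1}{dt}+\tfrac{\eps}{2}\Phi_1\le C$ and, since $\Phi_1(0)=0$, a direct Gronwall conclusion.
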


\begin{proof}
We  denote
$$\E_1(t)=\|K(t)z\|_{\H_2}^2=\|w(t)\|^2_4+\|\pt w(t)\|_2^2 .$$
For $\eps>0$ small to be fixed later, we set
$$\Phi_1=\E_1
+(\beta+\|u\|^2_1)\|w\|^2_3+\eps\l\pt w,Aw\r-2\l f,A
w\r + k \|w\|^2_2 .$$ The interpolation inequality
$$
\|w\|_3^2\leq\|w\|_2\|w\|_4
$$
and the fact that $\|w\|_2\leq C$ (which follows by comparison
from (\ref{energy}) and Lemma~\ref{lemmaDECAY}) entail
$$\beta\|w\|_3^2\geq-\frac12\E_1-C.$$
Therefore, provided that $\eps$ is small enough, we have the
bounds
$$
\frac12\E_1-C\leq \Phi_1\leq C\E_1+C.
$$
Taking the time-derivative of $\Phi_1$, we find
\begin{align*}
&\frac{d \Phi_1}{dt}+\eps\Phi_1+2(\delta-\eps)\|\pt w\|_2^2=\\
&=2\l\pt u,A^{1/2}u\r\|w\|^2_3+ 2\mu \l A^{1/2}v,A^{1/2}\pt w\r +\\
&\quad+\eps\Big[\mu \l A^{1/2}v,A^{1/2}w\r-(\delta-\eps)\l\pt
w,Aw\r -\l f,Aw\r\Big].
\end{align*}
Using (\ref{energy}) and the above interpolation inequality, if
$\eps$ is small enough, we control the rhs by
$$
\frac\eps8\E_1+C\sqrt{\E_1}\,+C\leq\frac\eps4\E_1
+\frac{C}{\varepsilon} \leq \frac\eps2\Phi_1
+\frac{C}{\varepsilon} .
$$
Hence,
if $\eps$ is fixed small enough (in particular $\eps<\delta$), we obtain
$$
\frac{d \Phi_1}{dt}+\frac\eps2\Phi_1 \leq C.
$$
%
%
Since $\Phi_1(0)=0$, from the Gronwall lemma and the controls
satisfied by $\Phi_1$, we obtain the desired estimate for $\E_1$.
\end{proof}

By collecting previous results, the  following Lemma can be applied to obtain the
existence of the attractor $\A$.

\begin{lemma}
\label{lemmaABSTRACT}\rm{(see \cite{GPV}, Lemma 4.3)} Assume that, for every $R>0$, there exist a
positive function $\psi_R$ vanishing at infinity and a compact set
$\K_R\subset\H_0$ such that the semigroup $S(t)$ can be split into
the sum $L(t)+K(t)$, where the one-parameter operators $L(t)$ and
$K(t)$ fulfill
$$\|L(t)z\|_{\H_0}\leq \psi_R(t)\qquad\text{and}\qquad
K(t)z\in\K_R,$$ whenever $\|z\|_{\H_0}\leq R$ and $t\geq 0$. Then,
$S(t)$ possesses a connected global attractor $\A$, which consists
of the unstable manifold of the set $\mathcal {S}$.
\end{lemma}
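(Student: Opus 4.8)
The plan is to derive Lemma \ref{lemmaABSTRACT} from the classical theory of asymptotically compact gradient semigroups, assembling three ingredients that are already in place: the bounded absorbing set of Theorem \ref{th_ass.set}, the decomposition $S(t)=L(t)+K(t)$ assumed in the hypothesis, and the Lyapunov functional underlying Lemma \ref{lemma nuovo}.

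First I would use the decomposition to show that $S(t)$ is \emph{asymptotically compact} on bounded sets. Fix $R>0$, take any sequence $z_n$ with $\|z_n\|_{\H_0}\leq R$ and any times $t_n\to\infty$, and split $S(t_n)z_n=L(t_n)z_n+K(t_n)z_n$. Since $K(t_n)z_n\in\K_R$ with $\K_R$ compact, a subsequence gives $K(t_{n_k})z_{n_k}\to\zeta$ in $\H_0$; since $\|L(t_n)z_n\|_{\H_0}\leq\psi_R(t_n)\to 0$, the same subsequence yields $S(t_{n_k})z_{n_k}\to\zeta$. Thus every such sequence is relatively compact, which is exactly asymptotic compactness. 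Combined with the bounded absorbing set from Theorem \ref{th_ass.set} (which also guarantees that orbits of bounded sets are bounded), this puts us squarely in the hypotheses of the abstract existence theorem (see \cite{HAL,TEM}), giving a global attractor $\A$ realized as the $\omega$-limit set of the absorbing ball. Connectedness of $\A$ then follows from the connectedness of the phase space $\H_0$ and of the balls $B(0,\bar R)$ by the standard argument.

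It remains to characterize $\A$ as the unstable manifold $W^u(\mathcal{S})$, and for this I would invoke the gradient structure. The functional $\F(t)=\L(t)-2\l f,u\r$ from Lemma \ref{lemma nuovo} obeys $\frac{d\F}{dt}=-2\delta\|\pt u\|^2\leq 0$, so $\F$ is a strict Lyapunov functional, bounded below on the (bounded) attractor. On any complete bounded trajectory contained in a limit set, $\F$ must be constant, forcing $\frac{d\F}{dt}\equiv 0$, hence $\pt u\equiv 0$, so $u$ is time-independent and solves the stationary equation; thus the only such trajectories are equilibria, and the rest points of $S(t)$ coincide with $\mathcal{S}$. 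Since $\mathcal{S}$ is bounded in $\H_0$, the classical characterization theorem for gradient systems with asymptotically compact dynamics (see \cite{HAL,TEM}) yields $\A=W^u(\mathcal{S})$, completing the proof.

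The genuinely substantive point — rather than the existence argument, which the decomposition hypothesis essentially hands to us — is establishing the gradient structure and applying the invariance principle: one must confirm that along every trajectory on which $\F$ stays constant the velocity $\pt u$ vanishes identically, so that such a trajectory reduces to a stationary solution and the attractor is driven entirely by heteroclinic connections between points of $\mathcal{S}$. This step, together with correctly matching the available estimates to the exact hypotheses of the abstract gradient-system theorem of \cite{HAL,TEM}, is where the main care is required.
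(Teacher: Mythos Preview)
The paper does not prove this lemma at all: it is quoted verbatim from \cite{GPV}, Lemma~4.3, and used as a black box to conclude Theorem~\ref{MAIN} once Lemmas~\ref{lemmaDECAY} and~\ref{lemmaCPT} are in hand. So there is no ``paper's own proof'' to compare against.

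Your sketch is the standard route to such a result and is essentially correct: the decomposition hypothesis yields asymptotic compactness exactly as you wrote, and together with the bounded absorbing set this gives existence and connectedness of $\A$ via \cite{HAL,TEM}; the characterization $\A=W^u(\mathcal{S})$ then follows from the gradient structure provided by the strict Lyapunov functional $\F$. One small point of presentation: the lemma is stated abstractly, yet you invoke Theorem~\ref{th_ass.set} and the specific Lyapunov functional of Lemma~\ref{lemma nuovo}, which are features of the particular problem rather than hypotheses of the lemma. In \cite{GPV} the lemma is formulated in a context where a Lyapunov functional and a bounded absorbing set are already part of the standing assumptions; your argument is really a proof of the lemma \emph{as applied to the present semigroup}, which is what the paper needs. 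Also, your phrasing ``on any complete bounded trajectory contained in a limit set, $\F$ must be constant'' slightly conflates two steps: what one actually uses is that $\F$ is bounded and monotone along any complete bounded trajectory in $\A$, so it has limits as $t\to\pm\infty$, and the corresponding $\alpha$- and $\omega$-limit sets are invariant sets on which $\F$ is constant, hence contained in $\mathcal{S}$. With that minor tightening, your argument is sound.
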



\end{document}